\newtheorem{claim}{Claim}
\newtheorem{proposition}{Proposition}
\newtheorem{definition}{Definition}
\newtheorem{lemma}{Lemma}
\newtheorem{theorem}{Theorem}
\def\given{\;|\;}
\title{What Constitutes a Less Discriminatory Algorithm?}
\author{
  Benjamin Laufer \\
  Cornell University\\
  New York, NY \\
  \texttt{bdl56@cornell.edu} \\
  \And 
   Manish Raghavan \\
   Massachusetts Institute of Technology \\
   Cambridge, MA \\
   \texttt{mragh@mit.edu} \\
   \And
   Solon Barocas \\
   Microsoft Research \\
   New York, NY \\
   \texttt{solon@microsoft.com} \\
  }
\date{August 2024}
\begin{document}

\maketitle

\begin{abstract}    
    Disparate impact doctrine offers an important legal apparatus for targeting discriminatory data-driven algorithmic decisions. A recent body of work has focused on conceptualizing one particular construct from this doctrine: the \textit{less discriminatory alternative}, an alternative policy that reduces disparities while meeting the same business needs of a status quo or baseline policy. However, attempts to operationalize this construct in the algorithmic setting must grapple with some thorny challenges and ambiguities.
    In this paper, we attempt to raise and resolve important questions about less discriminatory algorithms (LDAs).
    How should we formally define LDAs, and how does this interact with different societal goals they might serve?
    And how feasible is it for firms or plaintiffs to computationally search for candidate LDAs?
    We find that formal LDA definitions face fundamental challenges when they attempt to evaluate and compare predictive models in the absence of held-out data.
    As a result, we argue that LDA definitions cannot be purely quantitative, and must rely on standards of ``reasonableness.'' 
    We then identify both mathematical and computational constraints on firms' ability to efficiently conduct a proactive search for LDAs, but we provide evidence that these limits are ``weak'' in a formal sense.
    By defining LDAs formally, we put forward a framework in which both firms and plaintiffs can search for alternative models that comport with societal goals.
\end{abstract}

\section{Introduction}
\label{sec:intro}
Over the past decade, most scholarship on algorithmic discrimination that engages with U.S. law has focused on disparate impact doctrine. Unlike disparate treatment doctrine, which requires that decision-makers discriminate intentionally or take legally protected characteristics into account explicitly, disparate impact doctrine allows plaintiffs to challenge facially neutral policies that nevertheless have an unjustified or avoidable disparate impact.\footnote{Disparate impact claims can be raised under the discrimination laws that regulate employment (Title VII), lending (the Equal Credit Opportunity Act (ECOA)), and housing (the Fair Housing Act (FHA)).} Scholars have worried that algorithms will be largely immune from disparate treatment claims because they are not likely to be designed to discriminate intentionally and because they often do not consider legally protected characteristics \citep{barocas2016big,kim2017data,yang2020equal,hellman2020measuring,bent108algorithmic}. Instead, algorithms are more likely to give rise to unintentional, but avoidable disparate impacts.

Disparate impact cases are initiated by a plaintiff who puts forward evidence that the decision-making process at issue has resulted in a disparate impact along the lines of a legally protected characteristic (e.g., race, gender, age, etc.). For example, somebody who was denied a loan would first demonstrate that there is a gender disparity in the approval rates for applicants. The defendant can then put forward what is known as a ``business necessity defense,'' where they claim that the observed disparities are required to meet some legitimate business goal --- for example, that a lender's credit scoring model predicts default to some reasonable degree of accuracy. Finally, the plaintiff can rebut the firm's justification by pointing to what is commonly known as a less discriminatory alternative: an alternative decision-making process that would serve the firm's goals equally well, but with less disparate impact. If a rejected credit applicant can furnish a credit scoring model of equivalent accuracy that has a smaller gap in selection rates across gender groups, then the firm can be found liable for discrimination. 

The final step in disparate impact doctrine and the notion of a less discriminatory \emph{algorithm} (LDA) have attracted significant scholarly attention in recent years, as research suggests that there may be readily available less discriminatory alternatives to existing algorithms used in many contexts. One idea in particular that is attracting growing interest in both the technical and legal communities is the phenomenon of model multiplicity: it is often possible to develop many different models that all exhibit the same accuracy but make quite different predictions on individual points (i.e., a specific person) and groups of points (i.e., specific groups of people) \cite{pmlr-v119-marx20a, black2022model, black2023less, coston2021characterizing, rudin2024amazing, forde2021model, watson2023multi}. Two models might be equally accurate, but make opposite predictions for a specific person. Likewise, the first model might select members from one group more than members of another group, while the second model might select members from both groups at a similar rate, even though both models are equally accurate. Multiplicity therefore suggests that there is not always an inevitable trade-off between accuracy and fairness. Instead, practitioners will often be able to develop a large set of equally accurate models and then choose among these the one that happens to have less disparate impact across groups. Model multiplicity speaks directly to the question of less discriminatory algorithms because it tells us that there will often be many equally accurate ways to select among individuals, some of which will have more of a disparate impact than others.

Legal scholars and computer scientists have worked together to capitalize on the promise of multiplicity for dealing with algorithmic discrimination, exploring the implications of multiplicity for disparate impact doctrine \cite{black2023less}, and proposing specific techniques for finding the least discriminatory alternative \cite{gillis2024operationalizing}. Civil society organizations have likewise sought to leverage this insight, calling on firms to take affirmative steps to find less discriminatory alternatives when developing their decision-making algorithms and asking regulatory agencies to clarify their expectations of regulated firms \cite{NCRC_2022,consumerfed2024,NFHA2024,consumerreports,finreglab}. And a range of federal and state regulators have now issued statements instructing firms to search for less discriminatory alternatives while developing algorithms \cite{cfpb2024,HUD2024,NYS2024,noauthor_guidance_2025}. 

\paragraph*{Open challenges for operationalizing LDAs} Despite this enthusiasm, there remains some uncertainty around LDAs. First, LDAs seem to advance many possible goals: establishing that discrimination has occurred, preventing further discrimination from occurring, and keeping discrimination from occurring in the first place. In the course of a lawsuit initiated by an individual plaintiff, an LDA serves as evidence of an avoidable disparate impact and thus of illegal discrimination. It demonstrates that the defendant needlessly subjected the plaintiff to an adverse outcome that could have been avoided by adopting an equally effective decision-making policy that would have resulted in a less disparate impact. At the same time, LDAs can also prevent future disparate impact by compelling defendants found liable for discrimination to adopt the identified alternatives from that point forward. Finally, LDAs may keep discrimination from happening in the first place: faced with the risk that a future plaintiff might be able to identify an LDA, firms might be incentivized to pro\-actively search for LDAs to head off possible lawsuits. Prior work on LDAs has not been clear about which of these goals they are expected to serve and whether different goals would lead to different interpretations of what constitutes an LDA.

Second, similarly unsettled is the degree to which practitioners can exploit multiplicity to reduce disparate impact. Prior scholarship has acknowledged that there are likely to be limits to how far practitioners can go in leveraging the phenomenon to minimize disparities \cite{islam2021can, black2023less, rodolfa2021empirical}, but stops short of identifying fundamental limits. While Black et al. point out that ``there are technical limits to the extent to which error can be redistributed to reduce disparity,'' they don't pin down the point at which these restrictions begin to apply. If anything, a fair amount of prior work on algorithmic discrimination takes as a given that there is an inevitable trade-off between accuracy and disparate impact \cite{corbett2017algorithmic,menon2018cost,goel2018non,jaramillo2024understanding}. Greater clarity about just how much can be achieved with multiplicity would help to address these claims and any claim by specific defendants that a disparate impact was necessary to achieve its goals.

Third, scholars seem to be unsure whether it's possible to find the \textit{least} discriminatory algorithm \cite{scherer2019applying,black2023less}, on the belief that doing so would require an unbounded --- and therefore intractable --- search. If this is indeed the case, it's unclear what should be expected of an LDA in practice. The difficulty of finding LDAs remains an open question --- and an urgent one, given that firms are likely to invoke this difficulty as a reason why they should not be expected to search for LDAs or why an LDA presented by a plaintiff would have been too difficult for them to discover.

\paragraph*{What we do in this paper.} Here we aim to resolve this lingering uncertainty around what LDAs are meant to achieve, how much can be achieved by exploiting multiplicity, and how easily this can be achieved. We start with a set of conceptual and theoretical results, and then turn to an empirical demonstration on real datasets. Our conceptual contributions are organized as follows:
\begin{itemize}
    \item  When LDAs are litigated, plaintiffs and courts might have more information than what is available when the models are built. In Section \ref{sec:generalizability}, we discuss a variety of statistical challenges arising from this information asymmetry.
    We put forward an LDA definition based on projections of how a model will generalize, as opposed to its measured performance on a realized dataset.
    \item  Mathematically, a classifier can only exhibit certain combinations of accuracy and selection rate disparity between groups, given the size of each group and the base rate of the property or outcome of interest in each group. In Section \ref{sec:polygon}, we provide a strict characterization of what accuracy and disparity levels a classifier can achieve for a fixed population, finding that there exist lower-disparity classifiers in most reasonable settings.
    \item 
    From a computational perspective, the existence of an LDA does not mean one can efficiently find it.
    In Section \ref{sec:computation}, we show formally that a search for a lower-disparity classifier at some baseline level of utility is computationally intractable (NP-hard) in general.
    However, we argue formally that computational hardness is unlikely to be prohibitive in practice.
\end{itemize}
Although our technical results sometimes seem to give firms a natural defense against allegations of discrimination, the negative results only tell part of the story. 
Our theoretical and empirical analysis suggests that there exist effective and efficient strategies for identifying lower disparity alternatives over the set of possible policies, even when the math suggests that the \textit{perfect} (least discriminatory and accurate) policy may be out of reach. 
In Section \ref{sec:empirics}, we provide results from empirical tests of a particular group of methods that randomly generate alternative models in the same model class. In some observed instances, we find that these methods can reduce disparity out-of-sample, sometimes at no cost to utility.

\subsection{Related Literature}

Here we provide an overview of related literature on less discriminatory alternatives, accuracy and fairness, and multiplicity.

\paragraph*{Less discriminatory alternatives and statistical approaches} The applicability of discrimination law to data-driven algorithmic decisions has received much recent attention \cite{barocas2016big,kim2017data,gillis2019big}. Attempts to define and operationalize normatively or legally useful notions of fairness abound \cite{hardt2016equality,dwork2012fairness,kim2018fairness}. In recent work, \citet{gillis2024operationalizing} put forward a definition formalizing the notion of a less discriminatory alternative. In their formulation, optimization occurs over a fixed dataset and both false positive rate and false negative rate are constrained. The authors use a mixed-integer program to identify LDAs in the case of linear classifiers. 
\citet{auerbach2024testing}, similarly inspired by disparate impact law, develop a statistical method for comparing the fairness-improvability of an alternative classifier to a baseline. 
\citet{black2024d}, however, point out that fairness improvements may not generalize --- and that requiring the reporting of fairness improvements on training data could lead to manipulations and faulty results, which the authors term `D-hacking.' 

\paragraph*{Accuracy-fairness trade-offs and welfare} Fair machine learning is a vast area of research and finding satisfactory notions compatible with commonly held normative intuitions has proved curiously difficult \cite{barocas2023fairness,chouldechova2018frontiers,cooper2021emergent,laufer2022four}. Scholars have developed impossibility results, trade-offs, and welfare notions to frame the terms and achievable goals of defining algorithmic fairness
\cite{kleinberg2016inherent,chouldechova2017fair}.
\citet{liang2022algorithmic} put forward a model for understanding the achievable performance across two groups' error rate and overall accuracy. \citet{pinzon2024incompatibility} analyze the feasible set of realized equal opportunity and accuracy and provide conditions where these are incompatible. \citet{pleiss2017fairness}, focused on trade-offs between fairness notions, find that satisfying welfare defined over different error rates is equivalent to treating some population members randomly. Some empirical work has leveraged findings in this space to develop fairness or fairness-accuracy improvements in real-world systems \cite{lee2024ending,liu2024redesigning,buolamwini2018gender}.

\paragraph*{Multiplicity} There is a growing interest in the idea that many models can exhibit similar performance, but differ in the other properties that they possess (e.g., fairness, interpretability, etc.). This notion is referred to commonly as model multiplicity \cite{black2022model,watson2023multi,watson2023predictive}, the Rashomon effect \cite{breiman2001statistical,watson2024predictive}, or under-specification \cite{d2022underspecification}. \citet{semenova2022existence} find that the size of the set of good models can serve as a measure of model-class simplicity. \citet{cooper2023my} find that variance in the design of classification algorithms can lead to individual people receiving different treatments depending on which model is chosen. In the context of disparate impact cases, \citet{black2023less} argue that the onus should be placed on the firm --- rather than the plaintiff --- to conduct a reasonable search for a less discriminatory alternative in light of the flexibility afforded by multiplicity \cite{coston2021characterizing,rudin2024amazing}.
\section{Statistical Nuances in Defining the LDA}
\label{sec:generalizability}

A less discriminatory alternative must satisfy two properties: it must be at least as effective as the original practice from a business perspective, and it must, of course, be ``less discriminatory,'' typically as measured by disparities in selection rates across demographic groups~\citep{black2023less}. When the practices and alternatives in question are decisions based on predictive models, the literature typically suggests defining business efficacy in terms of the model's predictive accuracy~\citep[e.g.,~][]{gillis2024operationalizing,relmancolfax2021}.
These two metrics---predictive accuracy and selection rate disparities---are the primary tools used in the literature to reason about LDAs.

Both of these metrics are \textit{distributional}.
A model's accuracy and disparate impact cannot be determined in the abstract; it can only be determined with respect to a specific dataset. In other words, the accuracy and disparate impact exhibited by a model depends on the data that it is applied to. In order to evaluate a candidate LDA, we must first ask: what data should be used to assess a model's accuracy and selection rate disparities?
Typically, when developing a machine learning model, a firm must build a dataset that (in their judgment) best approximates the distribution of applicants they expect their model will encounter.

Suppose a court wants to evaluate whether a candidate LDA is at least as accurate and less discriminatory than an existing model.
Now that the model has been deployed, courts and plaintiffs have a new source of information at their disposal, one that the firm could not have used during development: They have observed the actual applicants on whom the model was used to make decisions.
Our focus in this section will be on how the information available in these two contexts---during development and after deployment---impacts how we might think about operationalizing LDAs.\footnote{Throughout this paper, we limit the scope of our inquiry to the task of finding a less discriminatory algorithm when there is a predefined training dataset and prediction target. Of course, in the course of an actual lawsuit, a plaintiff could also challenge the dataset and target used by the defendant, arguing that both or either are flawed. We intentionally set aside this possibility for the sake of analytic clarity.}

\paragraph*{Measuring accuracy and disparity after deployment.}
We proceed by analyzing how a court might measure accuracy and disparity in turn.
A key challenge in measuring a predictive model's accuracy on a data distribution is the need for ``ground truth'' labels.
Without access to labels, one can observe the predictions made by a model, but has no ability to assess whether the predictions are correct. 
In the contexts covered by discrimination law, labels (i.e., outcomes) are not reliably observed after a model is deployed~\citep{raghavan2024limitations}. 
For example, in lending tasks where the outcome in question is loan repayment, labels can take years to detect.
Even worse, these contexts suffer from a \textit{selective labels} problem: Outcomes (like repayment) may only be observed for applicants who were selected (granted a loan) \citep{lakkaraju2017selective}.
Conventional measures of accuracy using post-deployment data are not reliable or applicable.\footnote{An active line of research seeks to combine information from pre- and post-deployment data to overcome this challenge~\citep{coston2020counterfactual,coston2021characterizing,rambachan2022robust}.}

Selection rate disparity, on the other hand, can be measured on post-deployment data because it does not depend on ground truth labels. To measure selection rate disparity, one can compute the fractions of each demographic group who were selected based on the model's predictions. It suffices to observe the transcript of the model's predictions for all applicants alongside their demographic information.

\subsection{From measured performance to projected performance}

Given the limitations of post-deployment data discussed above, it might seem reasonable to estimate a model's real-world properties as follows: 
Use the firm's pre-deployment data to measure accuracy, but use post-deployment data to measure disparity.
While this might indeed provide a reasonable picture of realized model performance, we will argue that it is not a workable standard for an LDA.
Simply measuring accuracy or disparity on available data is not sufficient as a basis for defining an LDA. 

Key to this claim is a simple technical insight: One can almost always find a model that outperforms a given benchmark measured on existing data.
In other words, a plaintiff will in general be able to produce a model that achieves values of both accuracy and disparity that are at least as good as the defendant's original practice.
But such a model is clearly undesirable to both firms and plaintiffs, and as a result, one cannot reliably determine whether a model is an LDA simply by measuring its performance.
\begin{proposition}
\label{prop:no-lda}
Under the assumption that no two applicants are identical, we can find a decision rule that simultaneously achieves perfect accuracy on one dataset and zero disparity on another.
\end{proposition}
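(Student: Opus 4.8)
The plan is to construct the decision rule by brute-force memorization, exploiting the fact that the two datasets cannot overlap in feature space. Write $D_{\mathrm{acc}}$ for the dataset on which we want perfect accuracy (it comes equipped with labels) and $D_{\mathrm{disp}}$ for the dataset on which we want zero disparity (it comes equipped with group memberships). The hypothesis $h$ is an arbitrary function from the feature space to $\{0,1\}$, with no restriction to any particular model class---which is precisely what makes the statement both easy and, for the paper's purposes, damning. Since no two applicants are identical, every feature vector occurring in $D_{\mathrm{acc}} \cup D_{\mathrm{disp}}$ is distinct; in particular no feature vector lies in both datasets, so the value of $h$ on the points of $D_{\mathrm{acc}}$ can be chosen completely independently of its value on the points of $D_{\mathrm{disp}}$.

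With this decoupling in hand, the construction is immediate. First, for every labeled example $(x,y)$ in $D_{\mathrm{acc}}$, set $h(x) = y$; then $h$ agrees with the label on every point of $D_{\mathrm{acc}}$, i.e.\ it attains perfect accuracy there. Second, for every $x$ appearing in $D_{\mathrm{disp}}$, set $h(x) = 0$ (the constant ``reject'' decision); then within $D_{\mathrm{disp}}$ the selection rate of every demographic group is exactly $0$, so every pairwise difference in selection rates vanishes and the disparity is zero. Finally, define $h$ however one likes on all remaining feature vectors. Because $D_{\mathrm{acc}}$ and $D_{\mathrm{disp}}$ share no feature vector, these prescriptions never conflict, so $h$ is a well-defined decision rule with the claimed properties.

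There is essentially no hard step here; the only thing to check is that the hypothesis ``no two applicants are identical'' is doing exactly the work of ruling out a feature vector that is simultaneously pinned down by a label (via $D_{\mathrm{acc}}$) and counted toward a group's selection rate (via $D_{\mathrm{disp}}$)---without it, the two constraints could interact and the one-line argument would break. I would also remark that the construction is robust: ``reject everyone'' on $D_{\mathrm{disp}}$ can be replaced by ``select a $p$-fraction of each group'' for any common target rate $p$ and still yields zero disparity, and one can likewise hit any prescribed accuracy level on $D_{\mathrm{acc}}$ by memorizing only a corresponding fraction of its labels. The real content of the proposition is interpretive rather than technical: since such an $h$ is a blatant overfit on one dataset and a blanket denial on the other, it shows that no purely quantitative comparison of measured accuracy on $D_{\mathrm{acc}}$ and measured disparity on $D_{\mathrm{disp}}$ can by itself certify that a model is a genuine less discriminatory alternative---which is exactly the motivation for the ``reasonableness'' standard developed in the rest of the section.
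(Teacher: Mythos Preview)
Your proof is correct and essentially identical to the paper's own argument: memorize the labels on the accuracy dataset and apply a constant decision on the disparity dataset, relying on the ``no two applicants are identical'' assumption to ensure the two prescriptions never collide. The only cosmetic difference is that the paper chooses ``select everyone'' (selection rate $1$ per group) on the disparity dataset whereas you choose ``reject everyone'' (selection rate $0$ per group); both you and the paper explicitly note that any common per-group fraction works, so the arguments are interchangeable.
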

This observation is by no means deep.
Consider the following decision rule: Select all positively-labeled applicants in the pre-deployment data, and select all applicants in the post-deployment data. (Or otherwise select the same constant fraction of applicants in each demographic group in the post-deployment data.) Without limits on the complexity or flexibility of a decision rule, we can effectively construct one that performs perfectly according to the given measures.

Why do we bring up this example?
The reader might point out that no reasonable court would consider this pathological decision rule to be an LDA.
Aiming to achieve perfect accuracy on a dataset would lead to severe over-fitting, and selecting all applicants would clearly fail to meet the firm's business objectives.
And yet, this decision rule, strictly speaking, meets the standards we have put forward for how to understand the LDA: sufficiently good accuracy measured on the pre-deployment data, and reduced disparity measured on the post-deployment data.
Moreover, by a similar argument, \textit{any} LDA definition that relies entirely on performance measures of this form are vulnerable to the same type of pathological example.

We might hope to minimally amend our LDA definition to exclude decision rules that cherry-pick individuals like the one described above.
Unfortunately, there is no bright line separating the pathological case above and the general class of models learned from data.
As a simple example, a standard nearest-neighbors model will achieve perfect performance on its training set.
The same is true for a sufficiently expressive neural network.

This seemingly absurd example thus reveals a fundamental point: 
Performance metrics measured on a dataset, or even on a combination of datasets, cannot be the sole determinant of what constitutes an LDA.
One can always find a model that optimizes metrics on observed data but clearly should not be considered a viable alternative to the firm's practices.

\paragraph*{Projected performance.}

It is instructive to more closely examine \textit{why} a predictive model that achieves perfect performance on seemingly desirable metrics is nonetheless undesirable. 
When building machine learning models, the goal is \textit{generalization}.
Algorithm designers want the model to perform well not only on existing data, but also on data that they have yet to encounter.
In fact, machine learning procedures are designed to forgo pre-deployment (``in-sample'') performance in an attempt to improve outcomes during deployment (``out-of-sample''), typically by restricting attention to a class of ``simple'' models.\footnote{Recent literature on ``double descent'' has begun to change our understanding of classical bias-variance trade-offs, suggesting that sufficiently overparameterized models generalize well~\citep{belkin2019reconciling}. Still, without held-out data, it is difficult to evaluate whether models generalize.}
To make comparisons between candidate models, practitioners typically maintain a separate ``hold-out'' or ``validation'' dataset, which they do not use directly to train models.
Measuring performance this way prevents them from learning models that are overfit to the training data. Overfit models make accurate predictions on existing data but do not generalize to unseen data.

Evaluating on unseen data is the primary means by which researchers and practitioners can forecast the performance of models once deployed.
Without held-out data, one cannot reliably evaluate or compare models.
But critically, when evaluating a candidate LDA in a court setting, 
\textit{there is no held-out data}.

So how can a court determine whether a proposed LDA would be expected to generalize?

We argue that the way to evaluate a potential LDA is not to \textit{measure} its performance, but to \textit{project} its future performance.
Practitioners project performance by evaluating models on held-out data.
But in the absence of held-out data, projections of performance must rely on heuristics.
We argue that this requires an LDA to be held to a ``reasonableness standard.'' 
A key aspect of this standard is the complexity of the candidate LDA relative to the original model.
Classical results from statistical learning theory tell us that simple (or regularized) models tend to generalize better than complex models, despite the fact that their in-sample performance may be lower~\citep[see, e.g.,~][]{vapnik1998statistical}.
Consider a firm that chooses to build and deploy a very simple model---such as a logistic regression---because they lack sufficient data to reliably train a more complex model.
Suppose a plaintiff presents a court with a highly complex neural network as a candidate LDA.
Indeed, the plaintiff's neural network might outperform the firm's model on all relevant performance- and disparity-related metrics on both pre-deployment and post-deployment data.
Under a reasonableness standard, a court would still need to determine whether the plaintiff's model would be expected to generalize beyond the particular datasets.
Without such a standard, we argue, a plaintiff will virtually always be able to find a candidate LDA that appears to outperform the original model. 

With this, we propose the following LDA definition.
\begin{definition}
Given an original model $h^0$, a candidate model $h'$ is an LDA if $h$ is reasonably projected to have (1) accuracy at least as high as $h^0$, and (2) selection rate disparity significantly lower than $h^0$.
\label{def:lda}
\end{definition}

\subsection{How should we project performance?}

To fully operationalize Definition~\ref{def:lda}, we would need to formally define how one might make reasonable projections of model characteristics.
Doing so in full generality is beyond the scope this work. 
In what follows, we offer some general guidance. We begin with a discussion of projections and their use for comparing models. We then turn to the question of what data to use for projecting performance. Finally, we discuss how these procedures interact with the broad societal goals behind the LDA.

\paragraph*{Comparing models.}

Machine-learned models are typically selected from a pre-defined practitioner-chosen class of models.
This might be, for example, the class of decision trees of depth at most 5.
Or the class of neural networks following a particular architecture.
Subject to this choice of model class, ML training algorithms seek to select a model within the class that fits the training data well.
But this choice is typically not unique; it may depend, for example, on a randomly chosen seed or some other arbitrary choice~\citep{black2022model}.
As is common in the model multiplicity literature, we will illustrate the role of this randomness in Section~\ref{sec:empirics}.

For models that lie in the same model class, we can typically compare \textit{projected} performance by comparing \textit{measured} performance on a representative dastaset, simply because we have no further information with which to distinguish them.
Training algorithms do just this: Out of an infinitely large class of models, they seek to select the one that best fits the data because it also has the best projected performance.
Thus, we can reasonably compare models in the same model class just by measuring their performance characteristics on data.

When models come from different model classes, however, comparisons are less clear.
A simple logistic regression may have worse measured characteristics than a very complex neural network, but due to its complexity, the neural network is more likely to overfit to the data.
Whether or not we believe the regression should have better \textit{projected} characteristics depends a wide range of factors like the amount of data on which the models were trained and the complexity of that data.
In general, comparisons of this form cannot be conclusive and must be handled on a case-by-case basis.

\paragraph*{On what data should projections be based?}
Should projections be based on information obtained pre-deployment or post-deploy\-ment?
This choice reveals an additional subtlety.
During model development, firms (by definition) only have access to pre-deployment data. 
Plaintiffs, on the other hand, can in principle observe both pre- and post-deployment data.\footnote{This, of course, assumes that plaintiffs can successfully access the data through a discovery process; if not, plaintiffs may struggle to develop candidate LDAs.}
If projections are based solely on pre-deployment data, this asymmetry has little impact.
But if projections take post-deployment data into account, plaintiffs could potentially use this information to guide their LDA search.

Suppose, for example, a firm has identified two potential models, $h_1$ and $h_2$, during development and seeks to select between them.
Using their pre-deployment data, they find that the two models have (roughly) identical projected performance.
Absent further information, they cannot determine which will be less discriminatory; suppose they arbitrarily choose to deploy $h_1$.
After deployment, however, suppose projections based on a combination of pre- and post-deployment data suggest that $h_2$ is less discriminatory than $h_1$.
If they base their LDA projections on this combined data, a plaintiff will be able to put $h_2$ forward as an LDA, despite the fact that prior to deployment, the firm had no way to distinguish between them.
Thus, projecting model characteristics using post-deployment information may allow plaintiffs to engage in a form of ``D-hacking''~\citep{black2024d}, where they select between models that look identical or near-identical at development time.
Indeed, the model multiplicity literature suggests that the set of models with identical projected characteristics based on pre-deployment data will take on a range of projected characteristics when post-deployment data are factored in~\citep{black2022model}.

\paragraph*{Serving societal goals.}
Armed with these general insights, we return to the question raised in Section~\ref{sec:intro}: Is the court's goal (1) to establish evidence of past discrimination, (2) to remedy decision-making for future outcomes, or (3) to discourage discrimination from occurring in the first place? Focusing, for a moment, only on the first of these goals, it might seem to make sense to use post-deployment data---after all, the post-deployment data is about the people affected by the alleged disparate impact. We have argued, however, that post-deployment data is not a reasonable basis for establishing whether firms did something wrong. 
Instead, the first of these goals is best served by projecting characteristics based on pre-deployment data. The use of pre-deployment data alone best distinguishes between desirable and undesirable model development practices. Projecting using pre-deployment data also seems well-equipped for serving goal (3). If a firm can reliably test for LDAs prior to deployment, they can proactively search for LDAs to both minimize the risk of future liability and promote social non-discrimination goals.\footnote{\citet{black2023less} argue for precisely this sort of proactive ``reasonable search'' on the part of firms.}

On the other hand, if LDAs are intended to change ongoing practices to prevent future discrimination (goal 2), the most accurate projections should take into account all available data, including post-deployment data.
In the hypothetical example above, once we learn that $h_2$ indeed leads to lower post-deployment disparity than $h_1$, should we not expect it to have lower disparity on unseen data?\footnote{Of course, it could be the case that $h_2$'s reduced disparity happens to be a statistical artifact of the post-deployment data that does not generalize. Still, absent further information, $h_2$ is more likely than not to have lower disparity on future data than $h_1$~\citep{recht2019imagenet,salaudeen2024imagenot}.}
Even so, incorporating post-deployment data may make some uneasy.
After all, in our example, the firm had no information to suggest that $h_2$ would be preferable to $h_1$.
Would the use of post-deployment data lead the firm to be held liable for consequences they could not have reasonably foreseen?
We do not attempt to resolve this tension between discouraging bad practices and improving them for the future.
But by adopting a particular LDA definition, a court will have to navigate this tension, implicitly or explicitly.

\paragraph*{Other uses of post-deployment data.}

As described in Section~\ref{sec:intro}, the scope of this definition does not preclude challenges to the firm's practices on other grounds.
A plaintiff might reject the firm's pre-deployment data altogether as unrepresentative of the true applicant distribution.
Post-deployment data may be useful in supporting such claims.
They might also raise issues with the overall predictive setting, including the features collected and the validity of the construct being predicted~\citep{raghavan2020mitigating}, or they might provide a less discriminatory alternative wholly unrelated to a predictive model.

\section{Mathematical Limits}
\label{sec:polygon}

Defining LDAs is only half the challenge.
Firms seeking to proactively defend against LDA claims will need to search for LDAs meeting that definition.
In what follows, we explore fundamental limits on our ability to efficiently search for LDAs, corresponding to objections a firm might raise as to why it was unable to find an LDA.
Mathematically, certain accuracy-disparity combinations may be unachievable, leading to a necessary trade-off between these two objectives.
And even if an LDA exists, it might be computationally intractable for a firm to find it.
We discuss these potential objections in Section~\ref{subsec:polygon} and Section~\ref{sec:computation} respectively.
We find that while they each have some merit, they are both weak, in a sense we can make formal.
The mathematical limits on accuracy-disparity combinations are unlikely to be binding in practice, as they only hold for algorithms that are already highly accurate.
On the computational side, while it is computationally hard to find the \textit{least} discriminatory algorithm, this does not extend to finding an algorithm that is \textit{nearly} the least discriminatory.

The broad strategy of our analysis in this section is to make assumptions that are generous to the firm.
Hardness or impossibility results will thus hold even under less generous assumptions.
In particular, we will assume that the firm has full information about the population it selects from, as opposed to the standard machine learning assumption that the firm has \textit{samples} from this population.

\subsection{A Working Formalism}
\label{subsec:formalism}

Here we put forward a mathematical formalism for our setting, in which a firm makes a selection among a population. Our aim here is to introduce notation that will support inferences about what is attainable -- and what is not attainable -- in the search for a less discriminatory algorithm.

\paragraph*{Population and classifiers.} We have a finite population of people, where each member (e.g., a loan applicant) is described by categorical feature value $x$, binary group $g\in\{1,2\}$ and binary true label $y \in \{0,1\}$. Interchangeably, we may refer to the class labels as negative `$-$' and positive `$+$'. The full set of data values, labels, and groups are given by vectors $X, Y,$ and $G$, respectively. The size of the population is $n:=|X|$ and we may refer to subsets of the population using subscripts $n_{g,y}$. 
We define a classifier $h(x): x \rightarrow \{0,1\}$ as a mapping from features to binary labels. 
The particular classifier that a firm commits to is $h^0(x)$, and a candidate alternative classifier (which may or may not be an LDA) is $h'(x)$.

\paragraph*{Selection rates and errors.} For a given classifier $h$, denote the selection rate $\texttt{SR}(h):=\textbf{P}_x[h(x)=1]$. For a particular group $g$, the group-specific selection rate would be $\texttt{SR}_g(h):=\textbf{E}_x[h(x) \given g]$. Finally, because the firm is hoping to design their classifier $h(x)$ to mimic the value $y$, we define the standard notions of \textit{false positive rate}, \textit{true positive rate}, \textit{false negative rate}, and \textit{true negative rate}. We will refer to them using their 3-letter abbreviations. 
The group-specific false positive rate $\texttt{FPR}_g$ is accordingly defined as the $\texttt{FPR}$ conditional on group membership $g$.

\paragraph*{Disparity and utility.} We define the demographic disparity of a classifier $h$ to be $\mathbf{\Delta}(h) := \texttt{SR}_1(h)-\texttt{SR}_2(h)$. 
Sometimes, we'll be interested in absolute disparity, which we define as the absolute value of the demographic disparity. 
Finally, here we offer a notion of \textit{utility} for a given classifier $h$, which represents a firm's interest in accurate performance over the set of classifiers. 
We define utility as $\mathbf{U}(h;\lambda):=\texttt{TPR}-\lambda \texttt{FPR}$ where the value of $\lambda \in \mathbb{R}^+$ suggests the relative benefit of a true positive compared to the cost of increasing the probability of false positive classification.

\subsection{Feasible Set}
\label{subsec:polygon}

In a world where firms risk facing liability because they failed to find an LDA, we might imagine a firm raising certain objections to the notion of LDA we put forward in the prior section. One objection a firm might make is that there is an inevitable trade-off between reducing disparity and performance (accuracy, utility, etc.). At high accuracy, it is impossible to achieve 0 demographic disparity, so these goals can be in tension. Here we show that, indeed, there is some limit on a classifier's accuracy beyond which it is impossible to achieve a 0-disparity classifier, as long as the populations have different base rates. 

This section's approach is to consider the set of all attainable accuracy and disparity values for a binary selection rule. 
The set of attainable accuracy and fairness values can be surmised given access to a dataset $X$ with group membership $G$ and outcomes $Y$. If we can peek at the outcomes corresponding to every data point in a dataset, then we can produce decisions that bound the attainable accuracy and/or disparity performance. For example, the perfectly accurate decision rule could be attained by simply using the outcome data $Y$ as classification labels. 
The perfectly biased decision rule would only select based on protected attribute $G$. The highest-accuracy, lowest-disparity decision rule could be identified by starting with the perfect-accuracy classifier and swapping either 1) positive-labeled advantaged group members or 2) negatively-labeled disadvantaged group members until selection rates equalize (choose whichever swap optimally trades off utility for disparity reductions). 

\begin{figure}[t!]
        \centering
        \begin{subfigure}{.19\linewidth}
            \centering
            \caption{Example $G,Y$}
            \vspace{1.235cm}
            \begin{tabular}[b]{c|c|c|}
          $\ $  & \textbf{+} & \textbf{-} \\\hline
          \rule{0pt}{2ex}  
 		  \textbf{1} & 15 & 20 \\\hline 
     \rule{0pt}{2ex}  
          \textbf{2} & 5 & 10 \\\hline
    \end{tabular}  
    \\
    \begin{minipage}{.5cm}
            \ \\
            \ \\            \ \\
            \ \\            \ \\
            \ \\
            \ \\                        
            \ \\            
    \end{minipage}
        \end{subfigure}
        \hfill
        \begin{subfigure}{.405\linewidth}
            \centering
            \caption{Feasible $\mathbf{\Delta},\mathbf{U}$ (random)$\ \ \ \ \ \ \ \ \ \ $}
            \vspace{0.06cm}
             \includegraphics[width=1\linewidth]{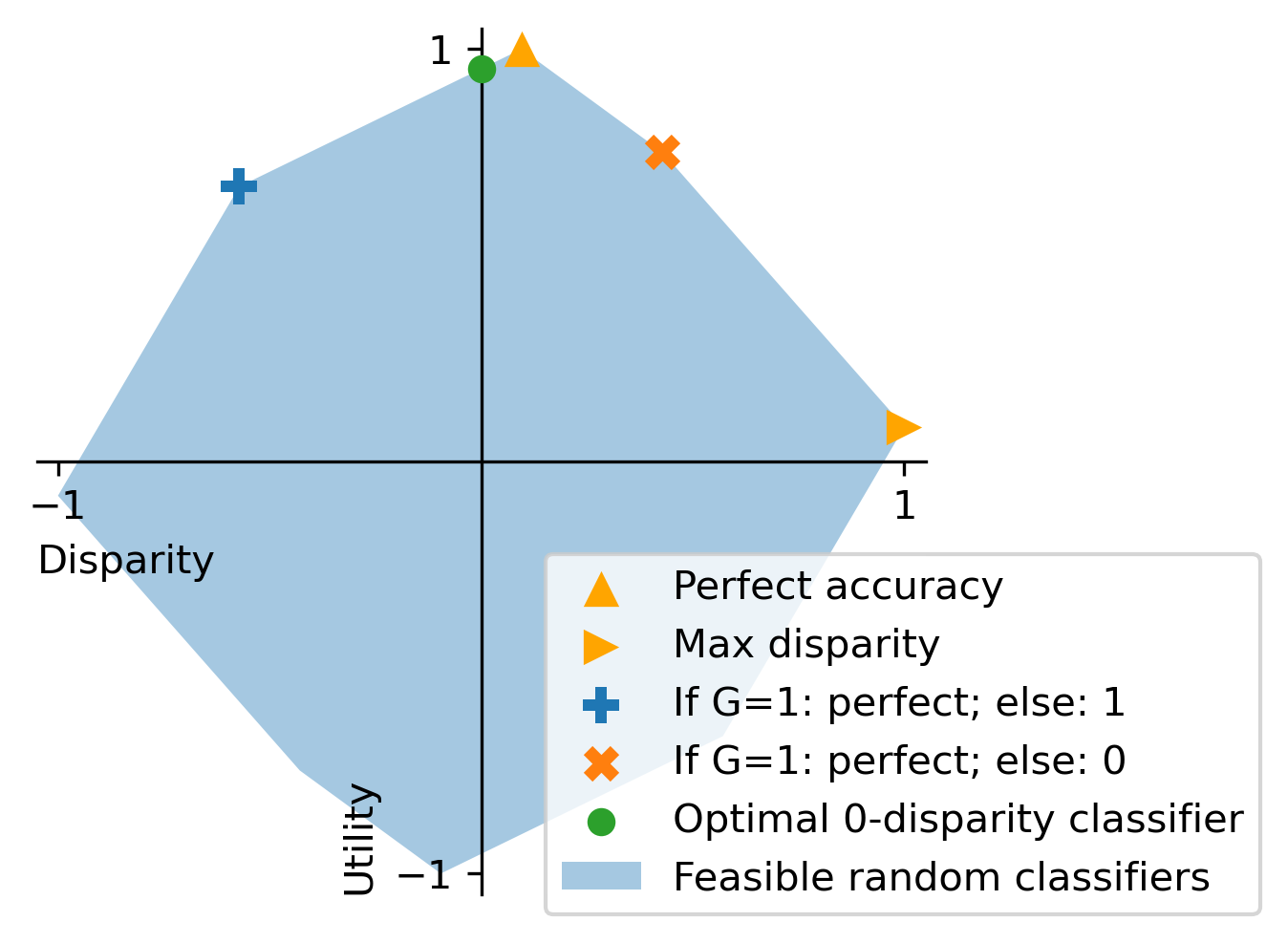}
        \end{subfigure} 
        \begin{subfigure}{.3175\linewidth}
            \centering
            \caption{Feasible $\mathbf{\Delta},\mathbf{U}$ (deterministic)}
             \includegraphics[width=1\linewidth]{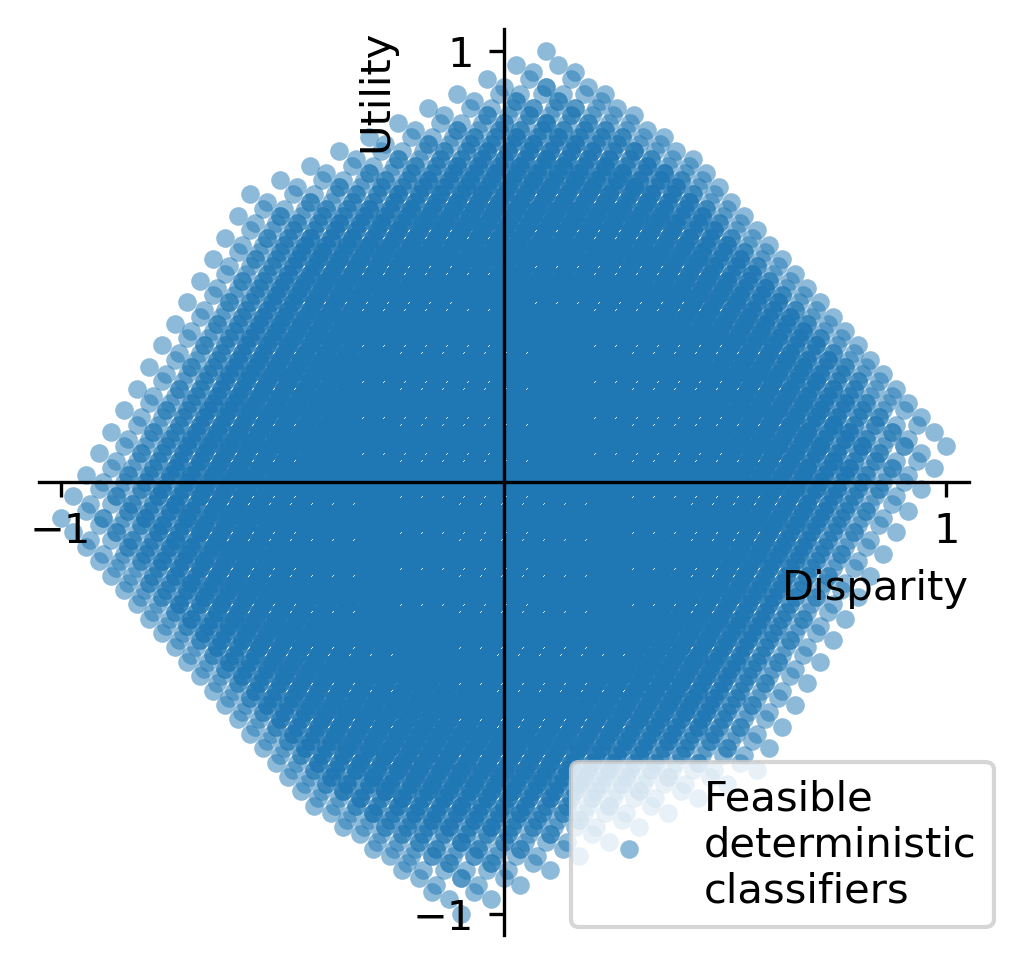}
        \end{subfigure} 
        \vspace{-1.2cm}
        \caption{Consider a given, finite population broken down by group belonging and outcomes (a; left). If randomized decision rules are feasible, then a polygon depicts the convex set of feasible, in-sample utility and disparity values (b; center). If solutions are restricted to deterministic classifiers over the dataset (c; right), the polygon bounds the achievable values.}
        \label{fig:polygon-example}
\vspace{-.5cm}
\end{figure}

A representation of the feasible set of all decisions on the utility-disparity plane is provided in Figure \ref{fig:polygon-example}. The blue shape in Figure \ref{fig:polygon-example}(b) is the region encompassing all achievable accuracy (i.e., utility) and disparity values. Note, however, that as long as the data and decision are discrete, not all values within the polygons are achievable. With discrete data, the set of achievable decision rules is not connected. Instead, it is a dense grid of points. Moving from one point to a neighboring point corresponds to switching the label on a single individual in the population. This grid is depicted in Figure \ref{fig:polygon-example}(c). For exposition, in much of the remainder our analysis, we allow randomized decisions, though the results should hold for the discrete case.

Any reasonable classifier should be in the upper right or upper left quadrant of the disparity-accuracy plane. 
Looking at Figure \ref{fig:polygon-example}, if two of our goals are to maximize utility and minimize disparity, then an ideal classifier should be as far north as possible, without veering too far west or east. Notice that the most accurate classifier requires some non-zero level of disparity: if a classifier exhibits perfect accuracy on a dataset, then its disparity is the difference in group-specific base-rates (this point is plotted as a yellow pyramid). 
Among the set of 0-disparity classifiers, the most accurate is plotted as a green point. 
There are a number of questions we can ask about this feasible set. For example, how high does accuracy have to be such that there does not exist an alternative classifier with 0-disparity and the same (or higher) firm utility? How high does accuracy have to be before we are unable to achieve a disparity in selection rates no greater than 10\%? 
In the remainder of this section, we make formal claims about the properties of the accuracy-disparity polygon.

\begin{theorem}
\label{claim:boundonfairness}
Assume group 1 has a higher base rate and randomized classifiers are feasible. 
There is a utility threshold 
\begin{equation*}\mathbf{U}^* = 1-\min \left[\frac{n_1}{n_+},\frac{\lambda n_2}{n_-}\right]\left(\texttt{BR}_1-\texttt{BR}_2\right)
\end{equation*}
such that there exists a zero-disparity alternative classifier $h'$ with $\mathbf{U}(h') \ge \mathbf{U}(h^0)$ if and only if $\ \mathbf{U}(h^0) \le \mathbf{U}^*$. Additionally, if $\ \mathbf{U}(h^0) > \mathbf{U}^*$, then the minimum disparity alternative classifier has $\mathbf{\Delta}(h') = 1-\min \left[\frac{n_1}{n_+},\frac{\lambda n_2}{n_-}\right]\left(\mathbf{\Delta}(h^*)-\mathbf{\Delta}(h^0)\right)$.
\end{theorem}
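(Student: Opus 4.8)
The plan is to recast the search over randomized decision rules as a four-variable linear program and read both claims off its optimum. First I would note that $\texttt{TPR}$, $\texttt{FPR}$, $\texttt{SR}_1$, $\texttt{SR}_2$ --- and hence $\mathbf{U}$ and $\mathbf{\Delta}$ --- depend on a (randomized) rule only through the expected number $s_{g,y}\in[0,n_{g,y}]$ of selected members of each cell $(g,y)$, and, treating applicants as individually addressable (as the surrounding text does when it speaks of ``swapping individuals''), every such vector $s=(s_{1,0},s_{1,1},s_{2,0},s_{2,1})$ in the box $\prod_{g,y}[0,n_{g,y}]$ is attainable. On this box $\mathbf{U}(s)=\tfrac{s_{1,1}+s_{2,1}}{n_+}-\lambda\tfrac{s_{1,0}+s_{2,0}}{n_-}$ and $\mathbf{\Delta}(s)=\tfrac{s_{1,0}+s_{1,1}}{n_1}-\tfrac{s_{2,0}+s_{2,1}}{n_2}$ are linear. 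So ``there is a zero-disparity $h'$ with $\mathbf{U}(h')\ge\mathbf{U}(h^0)$'' is equivalent to $\max\{\mathbf{U}(s):\mathbf{\Delta}(s)=0\}\ge\mathbf{U}(h^0)$ over the box, and the first claim reduces to evaluating that linear program; the second claim is the analogue with objective $\min\mathbf{\Delta}(s)$ under the constraint $\mathbf{U}(s)\ge\mathbf{U}(h^0)$.

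Second, I would show $\max\{\mathbf{U}(s):\mathbf{\Delta}(s)=0\}=\mathbf{U}^*$ in two halves. For achievability, start from the perfect-accuracy rule $h^*$ ($s_{g,1}=n_{g,1}$, $s_{g,0}=0$), which has $\mathbf{U}=1$ and $\mathbf{\Delta}(h^*)=\texttt{BR}_1-\texttt{BR}_2>0$, and close the selection-rate gap by whichever of two moves is cheaper: de-selecting group-$1$ positives costs $\tfrac{1}{n_+}$ of utility per $\tfrac{1}{n_1}$ of $\texttt{SR}_1$ (rate $\tfrac{n_1}{n_+}$), while selecting group-$2$ negatives costs $\tfrac{\lambda}{n_-}$ of utility per $\tfrac{1}{n_2}$ of $\texttt{SR}_2$ (rate $\tfrac{\lambda n_2}{n_-}$); doing this until $\mathbf{\Delta}=0$ costs $\min[\tfrac{n_1}{n_+},\tfrac{\lambda n_2}{n_-}](\texttt{BR}_1-\texttt{BR}_2)$, and the box is never exited since $\texttt{BR}_1-\texttt{BR}_2\le\min(\texttt{BR}_1,\,1-\texttt{BR}_2)$. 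For the matching upper bound I would use weak LP duality with the single explicit multiplier $\mu_0:=\min[\tfrac{n_1}{n_+},\tfrac{\lambda n_2}{n_-}]$: any zero-disparity $s$ has $\mathbf{U}(s)=\mathbf{U}(s)-\mu_0\mathbf{\Delta}(s)\le\max_{0\le s'\le n_{g,y}}(\mathbf{U}(s')-\mu_0\mathbf{\Delta}(s'))$, and because $\mu_0\le\tfrac{n_1}{n_+}$ and $\mu_0\le\tfrac{\lambda n_2}{n_-}$ the $s'_{g,1}$-coefficients of $\mathbf{U}-\mu_0\mathbf{\Delta}$ are nonnegative and the $s'_{g,0}$-coefficients nonpositive, so the box-maximizer is again $h^*$ and the value is $1-\mu_0(\texttt{BR}_1-\texttt{BR}_2)=\mathbf{U}^*$. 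Together these give the stated ``if and only if.''

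Third, for the second claim the same multiplier does the work directly: the inequality $\mathbf{U}(s)-\mu_0\mathbf{\Delta}(s)\le\mathbf{U}^*$ holds for \emph{every} $s$ in the box, so any $h'$ with $\mathbf{U}(h')\ge\mathbf{U}(h^0)$ obeys $\mathbf{\Delta}(h')\ge\tfrac{\mathbf{U}(h^0)-\mathbf{U}^*}{\mu_0}=\mathbf{\Delta}(h^*)-\tfrac{1-\mathbf{U}(h^0)}{\mu_0}$, using $\mathbf{U}^*=1-\mu_0\mathbf{\Delta}(h^*)$; when $\mathbf{U}(h^0)>\mathbf{U}^*$ this lower bound is strictly positive (so zero disparity is indeed out of reach, consistent with the first part), and it is met by running the cheaper gap-closing move from $h^*$ only until the utility budget $1-\mathbf{U}(h^0)$ is spent. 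Hence the minimum-disparity alternative with utility at least $\mathbf{U}(h^0)$ sits on the linear Pareto frontier $\mathbf{U}=1-\mu_0(\mathbf{\Delta}(h^*)-\mathbf{\Delta})$ at $\mathbf{U}=\mathbf{U}(h^0)$, which is the trade-off the theorem records.

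I expect the main obstacle to be the upper-bound step: one must be confident that the single multiplier $\mu_0$ certifies optimality rather than merely giving a loose bound. The clean way to see this is to note that $\mu\mapsto\max_{0\le s\le n_{g,y}}(\mathbf{U}(s)-\mu\mathbf{\Delta}(s))$ is convex and piecewise linear, equals $1-\mu(\texttt{BR}_1-\texttt{BR}_2)$ on $[0,\mu_0]$ (slope $-(\texttt{BR}_1-\texttt{BR}_2)<0$, which is where the hypothesis $\texttt{BR}_1>\texttt{BR}_2$ enters), and has slope $\texttt{BR}_2$ or $1-\texttt{BR}_1$ --- both nonnegative --- just to the right of $\mu_0$, so $\mu_0$ is its minimizer; alternatively one can replace this by a short direct case split on whether $\tfrac{n_1}{n_+}$ or $\tfrac{\lambda n_2}{n_-}$ is smaller and on the common value $\texttt{SR}_1=\texttt{SR}_2$. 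A smaller point to get right is the reduction in the first step: justifying that the attainable set really is the full box $\prod_{g,y}[0,n_{g,y}]$ relies on the standing assumption that applicants are distinct (equivalently, that decisions may be made per individual).
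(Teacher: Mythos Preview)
Your proposal is correct and shares the paper's core argument: parametrize randomized rules by the four cell-selection counts, anchor at the perfect classifier $h^*$, identify the two relevant ``swap'' directions (drop group-1 positives at rate $n_1/n_+$, add group-2 negatives at rate $\lambda n_2/n_-$), and take the cheaper one to trace the Pareto segment from $h^*$ down to the best zero-disparity rule.

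The difference is in how optimality is certified. The paper argues informally: it rules out the two swaps that hurt both objectives, observes each remaining swap has a constant utility-per-disparity ratio, and concludes the Pareto frontier is the line segment from $h^*$ to $h^f$. This establishes achievability cleanly but leaves the upper bound somewhat implicit---it does not explicitly rule out that some \emph{mixture} of moves (or a point away from $h^*$ altogether) could do better. You instead cast the problem as a linear program over the box $\prod_{g,y}[0,n_{g,y}]$ and close the gap with weak duality using the single multiplier $\mu_0=\min[n_1/n_+,\lambda n_2/n_-]$, checking sign-by-sign that $\mathbf{U}-\mu_0\mathbf{\Delta}$ is maximized at $h^*$. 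That is a genuinely tighter argument: it yields the global bound $\mathbf{U}(s)-\mu_0\mathbf{\Delta}(s)\le\mathbf{U}^*$ for \emph{every} feasible $s$, from which both the ``only if'' direction of the first claim and the lower bound in the second claim follow in one line. The paper's swap-based narrative is perhaps more intuitive and connects directly to Figure~\ref{fig:polygon-example}; your LP/duality framing is more self-contained and makes the optimality step airtight without a separate Pareto-dominance discussion.
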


The proof of the above finding is provided in Appendix \ref{app:polygon}. Intuition for this result can be found in Figure \ref{fig:polygon-example}(b). The classifiers that optimally trade off between utility and disparity can be found along a (small) line segment between the perfect-accuracy classifier and the optimal, 0-disparity classifier. As long as a starting classifier has utility less than the optimal, zero-disparity classifier, then, there exists a zero-disparity LDA with equal or greater utility.

When base rates differ, the accuracy-optimal classifier will have non-zero disparity. It is plausible to imagine that this fact could come up in a disparate impact case. A (hypothetical) firm defending against charges of discrimination might point to this as evidence that there is an inevitable trade-off between accuracy and minimizing disparity, and that therefore, searching for an LDA runs counter to business needs. 
Our findings here refute this claim. The fundamental trade-off between these values on a fixed dataset 
does not `kick in' unless the accuracy of the starting classifier is above a certain threshold, and even still, there is often room to maneuver to minimize disparity. Further, as depicted in Figure \ref{fig:polygon-example}, and as we will observe on empirical datasets in Figure \ref{fig:empirics-polygon}, this fundamental bound is vacuous except at extremely and often unreasonably high levels of accuracy. It is unlikely, therefore, that this bound would offer a sound defense to firms in realistic settings. 

Although Figure \ref{fig:polygon-example} depicts all feasible values of utility and disparity, not all points depicted can be achieved by a reasonable model. Our findings on the existence of feasible, 0-disparity alternative classifiers should not be read as suggesting that these models are always available or easy to find. Rather, these results establish that a particular argument --- that there exist no alternatives with lower disparity --- does not hold weight in many cases.  
\section{Computational Hardness}
\label{sec:computation}

In this section, we address another objection that firms may raise about our LDA definition: that finding the least discriminatory algorithm is resource-intensive and burdensome, and as a result, they should not be held to that standard. Here we analyze the computational burden, and find that this claim has some merit. 
It can indeed be computationally hard to find the least discriminatory algorithm. However, the notion of hardness we establish in this setting is relatively weak. Nothing prevents firms from finding algorithms that are very close to being the least discriminatory. As long as the LDA definition allows for a bit of wiggle room (e.g., an LDA has to be significantly better than the original algorithm), this objection does not hold. 

\paragraph*{Distributional setting.} This section, like the previous one, identifies certain limits in what the firm is able to achieve. In order to establish these limits, we allow ourselves to make \textit{generous} assumptions about the firm's capabilities and the information it is able to access. If a task is hard for a firm endowed with a generous set of tools and information, it is at least as hard for a firm in the real world. Here we show our results for the general setting where the firm accesses full information about the population's distribution. This means the firm is able to access the joint distribution of feature values $x$, group belonging and outcome. The firm can therefore access the following quantities relevant for the LDA search: all possible data values $\mathbf{X}$, the Bayes-optimal predictor over the data values $\sigma(x):= \textbf{P}[y=1\ |\ x]$, and the group-specific probability density $\rho_g(x) := \textbf{P}[x\ |\ g]$.

\paragraph*{The LDA problem.} Because we treat the LDA problem with some formality in this section, it is worth providing a technical definition for the sake of analysis. 

\begin{definition}[Full-information LDA]
\label{def:full-info-LDA}
        Given a set of data values $\mathbf{X}$, a Bayes-optimal predictor $\sigma(x)$, a group-specific probability density function $\rho_g(x)$, a utility function $\mathbf{U}(h;\lambda)$, and a baseline classifier $h^0(x)$, a \textbf{full information LDA} is a classifier $h'$ with utility at least $\mathbf{U}(h^0)$ and absolute disparity less than $|\mathbf{\Delta}(h^0)|$.
\end{definition}

We call this the `full information' setting because we assume that the firm knows the true labels associated with all candidates.
Our general proof procedures and findings would hold in either setting.
We note that our operationalization differs slightly from an alternative put forward by \citet{gillis2024operationalizing} in both our definition as accuracy as well as the class of decision rules we consider.  

\paragraph*{LDA Complexity}. Here we prove that the problem of certifying whether there exists an LDA, as defined in Definition \ref{def:full-info-LDA}, is NP-hard. 
\begin{theorem}
\label{thm:NP-complete}
    The full-information LDA problem $\langle
    \mathbf{X}, \sigma, \rho_g, h^0
    \rangle$ is NP-complete.
\end{theorem}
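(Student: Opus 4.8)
The plan is to show membership in NP by exhibiting the classifier itself as a polynomial-size certificate, and NP-hardness by reducing from \textsc{Subset-Sum}. For membership: a classifier in the sense of Definition~\ref{def:full-info-LDA} is a map $h'\colon\mathbf{X}\to\{0,1\}$, hence described by $|\mathbf{X}|$ bits; given this and the (rational) inputs $\sigma,\rho_1,\rho_2,\lambda$ and group proportions, one evaluates $\mathbf{U}(h'),\mathbf{\Delta}(h'),\mathbf{U}(h^0),\mathbf{\Delta}(h^0)$ as sums of $O(|\mathbf{X}|)$ rational terms and checks $\mathbf{U}(h')\ge\mathbf{U}(h^0)$ and $|\mathbf{\Delta}(h')|<|\mathbf{\Delta}(h^0)|$ in polynomial time. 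The crux is hardness. The structural fact I would lean on is that, for a \emph{deterministic} classifier, both objectives are \emph{linear} in the selection indicators $z_x:=\mathbf{1}[h'(x)=1]$: with $p(x)$ the marginal mass of feature value $x$, $\mathbf{\Delta}(h')=\sum_x(\rho_1(x)-\rho_2(x))\,z_x$ and $\mathbf{U}(h')=\sum_x\big(\tfrac{\sigma(x)p(x)}{\Pr[y=1]}-\lambda\tfrac{(1-\sigma(x))p(x)}{\Pr[y=0]}\big)z_x$, the denominators being constants. So the full-information LDA problem is exactly a $0/1$ feasibility problem ``$\exists\,z\in\{0,1\}^{|\mathbf{X}|}$ with $\langle a,z\rangle\ge\mathbf{U}(h^0)$ and $|\langle b,z\rangle|<|\mathbf{\Delta}(h^0)|$,'' where I get to choose the coefficient vectors $a,b$ (via $\sigma,\rho_1,\rho_2,\lambda$) and both thresholds (via $h^0$).

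Given a \textsc{Subset-Sum} instance $(w_1,\dots,w_m;T)$ with the $w_i$ positive integers and $0<T<W:=\sum_i w_i$ (degenerate cases such as $T\in\{W-1,W\}$ or very small $W$ handled directly), I would build $m$ ``item'' feature values $x_1,\dots,x_m$, each placed entirely in group $1$ with $\rho_1(x_i)=\varepsilon w_i$, $\rho_2(x_i)=0$, $\sigma(x_i)=1$ for a small rational $\varepsilon$ fixed afterwards, so that selecting $x_i$ adds $\varepsilon w_i$ to disparity and an amount proportional to $w_i$ to utility. To this I would add a constant number of ``gadget'' feature values — a ``drain'' $D$ with $\sigma(D)=0$ in group $2$, plus two ``filler'' points carrying the leftover group-$1$ and group-$2$ mass so that $\rho_1,\rho_2$ are genuine distributions and $\texttt{TPR},\texttt{FPR}$ stay well defined — with masses and labels chosen so that selecting \emph{any} gadget point is self-defeating, either blowing up $|\mathbf{\Delta}|$ or pushing $\mathbf{U}$ below the threshold. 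Finally I would take $h^0$ to select all the items together with $D$; a short calculation (with equal group sizes) then produces an explicit positive rational $\lambda$ for which $\mathbf{\Delta}(h^0)=\varepsilon(T+1)$ and $\mathbf{U}(h^0)$ equals the utility of an item-set of total weight exactly $T$.

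It then remains to verify two things. First, that no classifier selecting a gadget point satisfies both constraints, so every candidate LDA is a subset $S\subseteq\{x_1,\dots,x_m\}$ of item points. Second, that for such an $S$ the utility constraint reads $\sum_{i\in S}w_i\ge T$ while the disparity constraint reads $\varepsilon\sum_{i\in S}w_i<\varepsilon(T+1)$, i.e.\ $\sum_{i\in S}w_i\le T$ by integrality of the $w_i$; together these force $\sum_{i\in S}w_i=T$. Hence an LDA exists if and only if the \textsc{Subset-Sum} instance is a yes-instance, and since the reduction uses $O(m)$ feature values with rationals of polynomially bounded size, it runs in polynomial time. With membership in NP, this gives NP-completeness.

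The one-line idea — ``utility $\ge$'' enforces $\sum_{i\in S}w_i\ge T$ and ``disparity $<$'' enforces $\sum_{i\in S}w_i\le T$, so feasibility forces an exact subset sum — is simple; I expect the main obstacle to be the bookkeeping that makes it airtight: keeping $\rho_1,\rho_2$ legitimate distributions and $\texttt{TPR},\texttt{FPR}$ well defined, choosing the gadget masses, labels, $\lambda$, and $\varepsilon$ so that \emph{every} classifier touching a gadget point is infeasible (several gadget points and lower-order $\varepsilon$-terms need to be controlled uniformly), installing the two thresholds via an honest $h^0$ that does not itself reveal a solution, and checking that the strict disparity inequality collapses cleanly to ``$\le T$''. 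I would also flag that determinism of $h'$ is essential here: were arbitrary randomized rules $h\colon\mathbf{X}\to[0,1]$ allowed, maximizing $\mathbf{U}$ subject to the disparity constraint would be a linear program solvable in polynomial time, and it is precisely the integrality of $h$ that lets \textsc{Subset-Sum} embed. A reduction from \textsc{Partition} (where $T=W/2$) works equally well and slightly streamlines the offsets.
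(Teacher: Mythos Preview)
Your plan is sound and follows the same high-level route as the paper: NP-membership via direct verification of the classifier-as-certificate, and hardness via a polynomial reduction from \textsc{Subset-Sum} that encodes weights as group-conditional densities on ``item'' feature values and adds a constant number of auxiliary points to keep $\rho_1,\rho_2$ valid probability distributions and to block degenerate classifiers. The internal mechanics differ, though. The paper reduces from the zero-target variant over \emph{signed} integers (positive $w_i$ placed in group~1, negative $w_i$ in group~2), takes $h^0$ to select a single slack point $x^*$ carrying density $2/N$, and arranges every item density to be a multiple of $4/N$; a divisibility argument then shows that the only way to beat $|\mathbf{\Delta}(h^0)|=2/N$ is to hit disparity exactly~$0$, while the utility constraint is used only to exclude the empty classifier and a large negative-utility slack point $x^{**}$. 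Your construction instead makes both constraints bite simultaneously --- utility forces $\sum_{i\in S}w_i\ge T$ and the strict disparity bound plus integrality forces $\sum_{i\in S}w_i\le T$ --- which is a perfectly valid alternative and has the virtue of making explicit that integrality of $h$ is where the hardness lives (a point you rightly flag at the end). The trade-off is that your gadget analysis has more cases to close off (the sign of $\mathbf{\Delta}(h')$, combinations of the drain and the two fillers, and the tuning of $\lambda$), whereas the paper's divisibility trick lets the disparity side of the argument go through with essentially no case split.
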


Appendix \ref{app:NP-hard-proof} contains a proof of the theorem involving a reduction from the Subset Sum Problem.

\paragraph*{Implications}. 
How should we interpret this result? On its face, it seems to provide firms with a natural defense: it is too computationally burdensome to find an LDA in general. A closer look, however, suggests that NP-hardness may not be as much of an obstacle as it might seem. In Appendix~\ref{app:LDA-approximation}, we show there exists a $(1+\epsilon)$-approximation algorithm for the full-information LDA problem. While it may be difficult to find the \textit{least} discriminatory alternative, in the full-information setting, a firm can efficiently find a close approximation.\footnote{We stress that our results do not imply that a firm can efficiently learn an approximate least discriminatory algorithm using a traditional machine learning setting (i.e., empirical risk minimization over some limited class of models); on the other hand, a line of literature shows how to do precisely this for a variety of ML model classes~\citep[e.g.]{kamishima2011fairness,zafar2017fairness,lipton2018does}.}
This is significant because not all NP-hard problems admit arbitrarily close approximations.
In other words, suppose we amend our LDA definition to require that an LDA improves on existing practices by at least some \textit{de minimis} amount $\epsilon$. Even if a firm is unable to find the least discriminatory algorithm, as long as they are within $\epsilon$ of the least discriminatory algorithm, they do not need to worry about a plaintiff furnishing an LDA.\footnote{Beyond the computational reason described here, there are statistical reasons to require LDA improvements to be sufficiently large.} Conversely, if a plaintiff is able to find a significant improvement, we might conclude that the firm did not conduct a thorough enough proactive search for an LDA.

\section{Empirical Results}
\label{sec:empirics}

We conclude with an empirical analysis of a hypothetical LDA search. The aim of our experiments is to better understand the effectiveness and characteristics of certain simple heuristic search methods for finding an alternative classifier with lower disparity and higher utility performance on new data. We do not expect these methods to identify a Pareto-optimal, fair-and-accurate classifier on unseen data --- an algorithm designer would need to be unfathomably lucky to arrive at a classifier of that sort, even on relatively small datasets. Instead, the methods we test search for alternative classifiers that are similar to the starting classifier, by changing the random seed or randomly sampling and re-training. After producing 100 similar models, the methods we test evaluate the level of disparity using held-out data (i.e., an evaluation set) and select the minimum-disparity alternative.

\begin{figure*}[t]
        \centering
        \begin{subfigure}{.302\linewidth}
            \includegraphics[width=\linewidth]{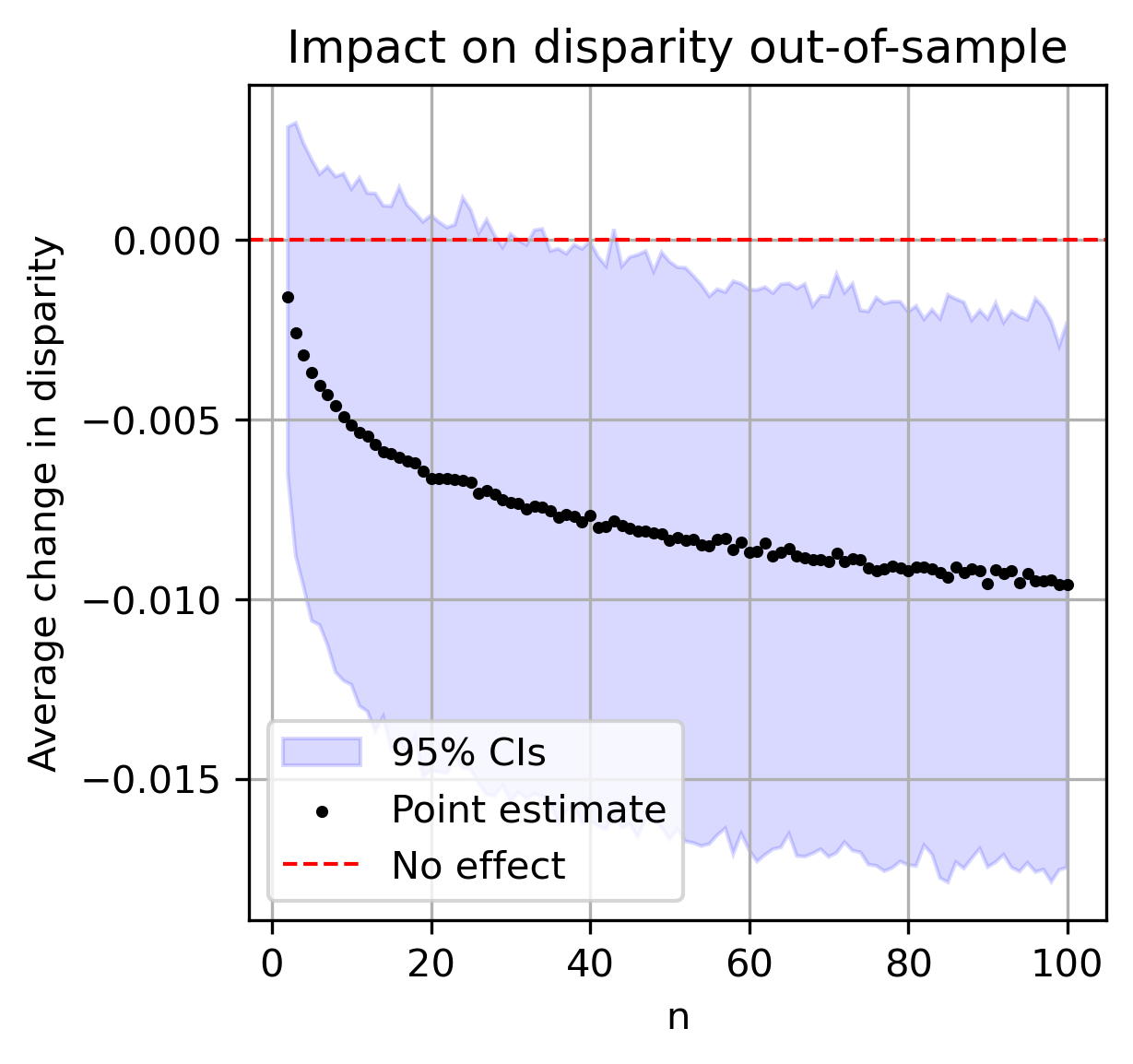}
        \end{subfigure}
        \begin{subfigure}{.294\linewidth}
            \includegraphics[width=\linewidth]{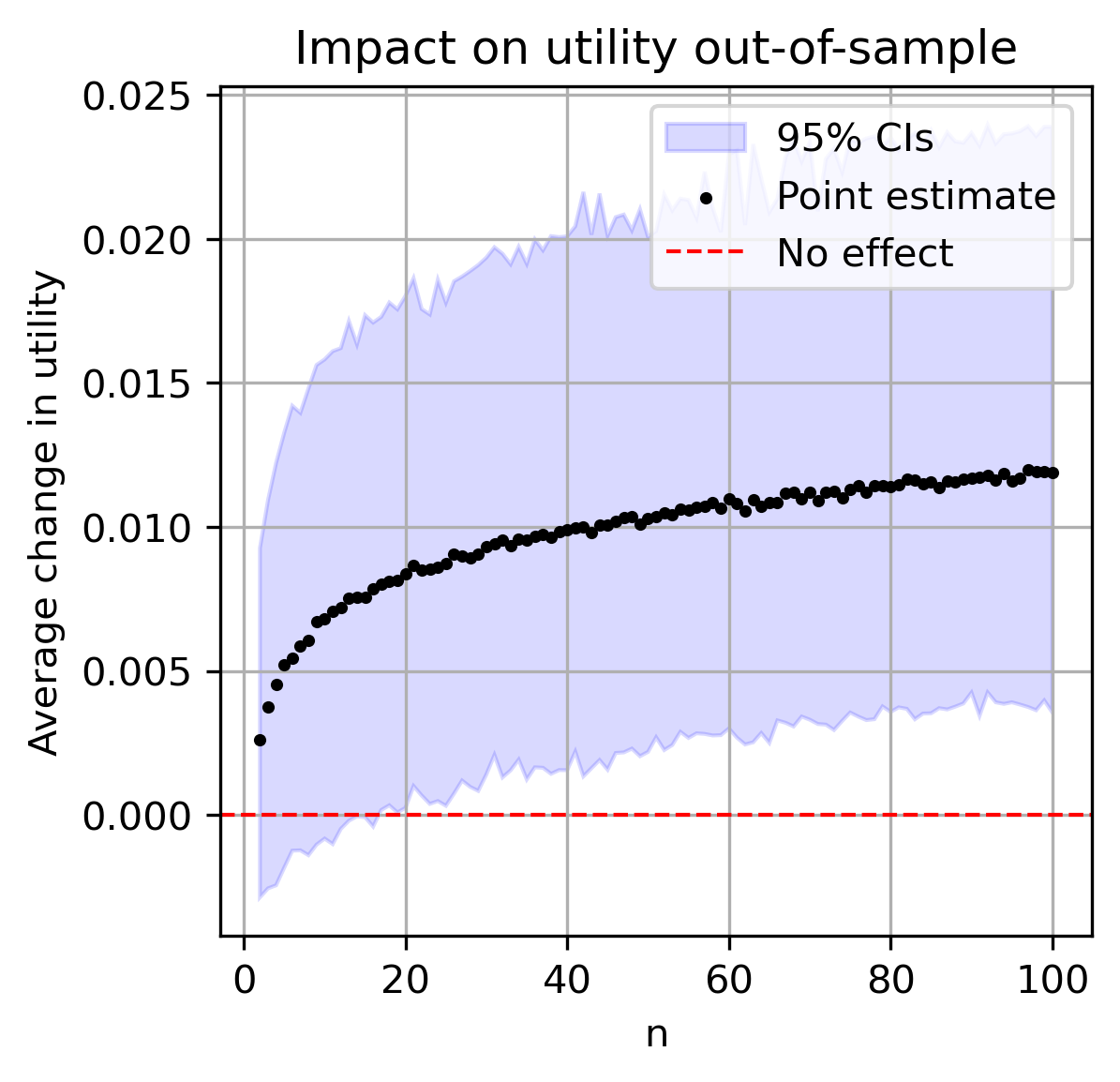}
        \end{subfigure} 
        \begin{subfigure}{.322\linewidth}
            \includegraphics[width=\linewidth]{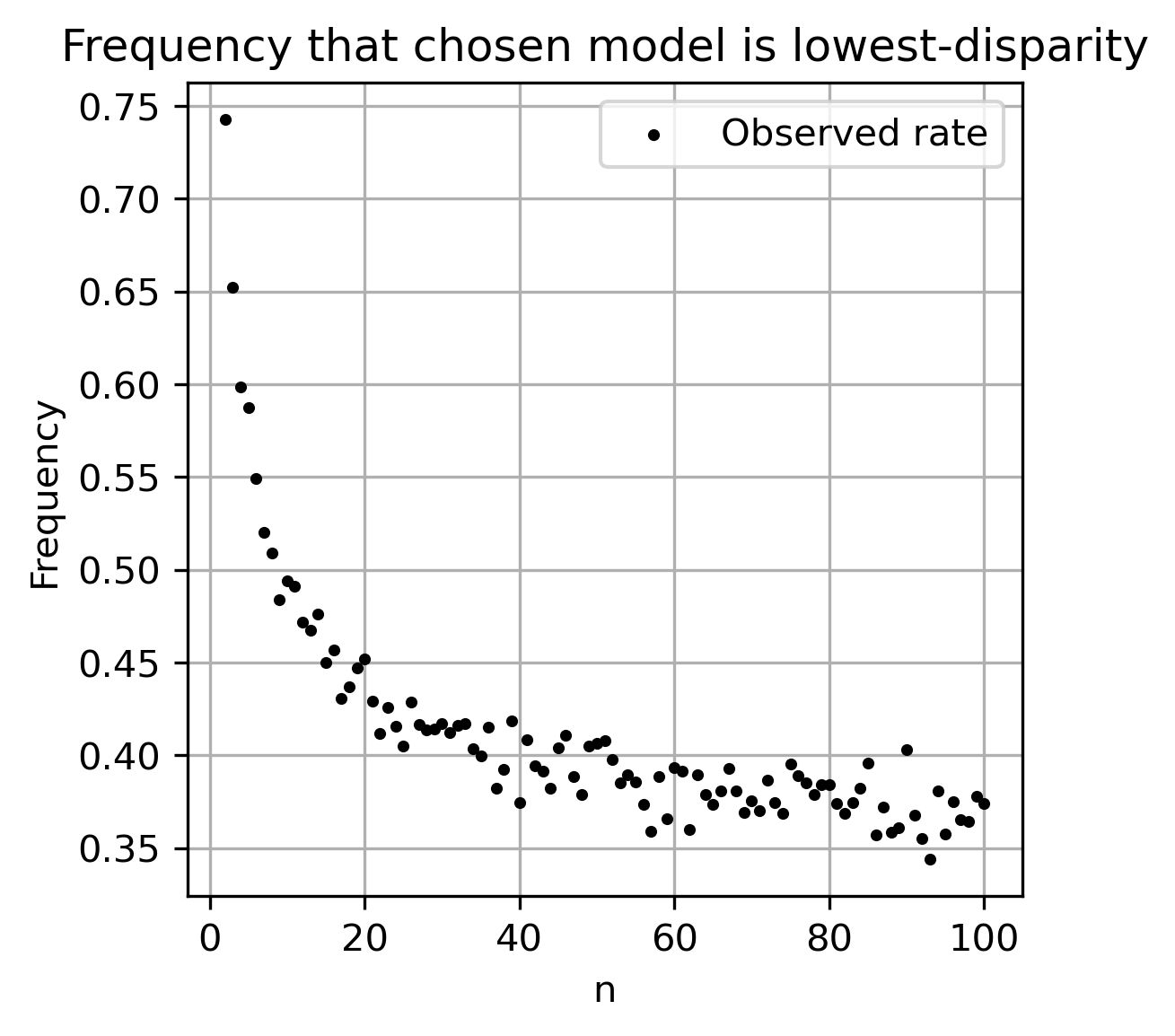}
        \end{subfigure}
        \caption{Results from a simple randomized search for a less discriminatory algorithm on the \texttt{Adult} dataset. The search procedure randomly samples with replacement from the training dataset and retrains a Random Forest classifier $n$ times, for $n \in \{2,...,100\}$. From the $n$ candidate models, it selects the minimum-disparity classifier, as measured using separate data from an evaluation dataset. As $n$ increases, we find disparity decreases, on average, on held out (out-of-sample) test data (a; left). In this case, we also find that utility does not diminish from this procedure (b; center). Confidence intervals are produced by repeating the procedure 2000 times. As the number of random draws increases, the probability of having selected the lowest-disparity model out-of-sample decreases (c; right) suggesting that conducting an effective search might require passing over models that end up having lower disparity.}
        \label{fig:empirics-rf-sample-adult}
\end{figure*}

Although these methods only search within a narrow slice of possible classifiers, they have some desirable properties that make them easy to work with and analyze. First of all, they do not explicitly model disparity and they produce a selection policy that does not, explicitly, use the protected attribute \citep{lipton2018does}. 
Second, the methods produce a set of alternative models that are in the same model class as the original classifier. This suggests that we can reliably compare projected performance (both accuracy and disparity) by measuring them on a fixed dataset. As a consequence, their disparity and utility performance on held-out evaluation data will represent identical and independent draws from a single distribution. This gives the algorithm designer good reason to select the \textit{minimum disparity} classifier on evaluation data as a best-guess for the best-performing model in general.

The code used to implement these tests and produce this paper's figures is available in an online repository.\footnote{Online repository: \href{https://github.com/bendlaufer/what-constitutes-LDA}{https://github.com/bendlaufer/what-constitutes-LDA}.}

\paragraph*{Data and models} For this test, we use two datasets, \texttt{Adult} and \texttt{German Credit}. Both datasets are collected in financial settings. We use the \texttt{Adult} dataset to perform a classification where the task is to predict which individuals make over \$50,000 in income per year. We use the \texttt{German Credit} dataset to perform a classification where the task is to predict credit-worthiness (as defined in the dataset's meta-data). We test three starting classifiers: a Logistic Regression, a Random Forest, and a Decision Tree classifier. We implement these classifiers in Python using \texttt{sklearn}, typically with pre-set default hyper-parameters (though we do set  \texttt{max\_depth}=5 for the random forest classifier). Though the datasets do not have balanced class weights, we train our models using balanced weights.
 A more detailed description of the datasets and models is provided in Appendix \ref{app:empirics}.

\paragraph*{Search processes} We conduct two search types to identify proximate models from the same model class. The first is sampling and re-training. This method involves taking a random sample, with replacement, the size of the training data and re-training the same model on this new sample of data. The second method involves re-training the same model using different \texttt{random\_seed} parameter values. We only use this second method for the Random Forest model because it is an ensemble model that explicitly uses pseudorandomness and the model changes significantly when we perturb the \texttt{random\_seed} parameter.

\begin{table*}[t]
    \centering
    \caption{Out-of-sample performance of various LDA search procedures}
    \begin{tabular}{@{}lllccc@{}}
        \toprule
        \textbf{Dataset} & \textbf{Model} & \textbf{Search} & \textbf{Disparity} & \textbf{Utility} & \textbf{Freq. min-disp.} \\ \midrule
        $\texttt{Adult}$ & Decision Tree     & Sample      & \cellcolor{green!20}-0.0453* & -0.0110 & 0.7310 \\
              & Logistic Reg. & Sample      & \cellcolor{green!20}-0.0170* & \cellcolor{red!20}-0.0189* & 0.6390 \\
              & Random Forest & Random Seed & \cellcolor{green!20}-0.0058* & \cellcolor{green!20}0.0130* & 0.2980 \\
              &     & Sample      & \cellcolor{green!20}-0.0096* & \cellcolor{green!20}0.0119* & 0.3740 \\ \midrule
        \texttt{German} 
                      & Decision Tree &  Sample      & -0.0019 & -0.0008 & 0.0105 \\
                                  & Logistic Reg. & Sample      & -0.0136 & -0.0061 & 0.0175 \\
                                  & Random Forest & Random Seed & 0.0015 & -0.0032 & 0.0040 \\
                                  &     & Sample      & -0.0061 & 0.0034 & 0.0125 \\ 
        \bottomrule
    \end{tabular}
    \label{tab:lda}
\end{table*}

\paragraph*{Experiment procedure} First, we split the data into train, evaluation, and test sets. Using the training data, for each search process and model type, we train 10,000 models, generated at random, and record their performance and qualities in a dataset. We will treat this set of 10,000 models as the space over which a firm is searching for LDAs. The utility, disparity, and selection rate is measured on training, evaluation and testing data for every model, and recorded in a dataset. To test the procedure of searching over $n$ alternative models, we draw from our dataset $n$ times for $n\in \{1,...,100\}$. Equipped with a subset of $n$ models, we select the model with lowest disparity on the evaluation data, and record its utility and disparity performance on test data. The average lift in utility or reduction in disparity is attained by comparing this selected model to the average performance over the $n$ models. For each value of $n$, we perform this sub-sampling procedure 2000 times, and record the mean, 2.5th and 97.5th percentiles to produce point estimates, lower-bounds, and upper-bounds, respectively.

Finally, we also track the rate at which the procedure tested produces a \textit{perfect guess} of the out-of-sample disparity-minimizing model, over the set of models tested. That is, if we conduct our procedure with $n=100$, we measure the frequency that the lowest-disparity model according to evaluation data is also the lowest-disparity model on the test dataset. 

We note that we do not claim to have the best possible procedure for searching, nor do we claim that the method we put forward is \textit{reasonable} in the legal interpretation as advocated by \citet{black2023less}. Instead, we aim to explore whether, in certain instances, simple methods can attain generalizable reductions in discrimination at no cost to accuracy (i.e., utility). We also wish to test whether these methods consistently arrive at the \textit{least} discriminatory classifiers out-of-sample, or whether they open up the possibility that firms test and reject alternatives that end up with lower disparity in hindsight.

\paragraph*{Results} The main results are reported in Table \ref{tab:lda}. The tests provide evidence that certain search methods can reveal the existence of lower-disparity alternative classifiers whose performance generalizes out-of-sample. The asterisk * signifies that the results are statistically significant according to the boostrapped 95\% confidence intervals, attained by repeating the search procedure 2000 times. In other words, the asterisk tests whether the directional change in disparity or utility was observed in at least 95\% of the 2000 times the procedure was tested. Disparity reductions are statistically significant according to this procedure on the \texttt{Adult} dataset for all models and search methods tested. The disparity effects on the \texttt{German Credit} dataset are not significant, which is explained both by the much smaller sample size and the minuscule starting disparity level.

The Random Forest model has the highest utility measure on both datasets compared to the other models tested. It also had the highest absolute disparity on both datasets. These results are depicted in Figure \ref{fig:empirics-polygon}. Performing an LDA search procedure on the RF \textit{increases} utility in our $\texttt{Adult}$ test, meaning in this case, utility increases and disparity decreases statistically--- no trade-off is observed. The observed differences in these values is visualized in Figure \ref{fig:empirics-rf-sample-adult}. In other cases, with other models, however, the utility decreases. This makes sense, given nothing about our procedure aims to maximize the utility. 

\begin{figure*}[t!]
        \centering
        \begin{subfigure}{.468\linewidth}
            \includegraphics[width=\linewidth]{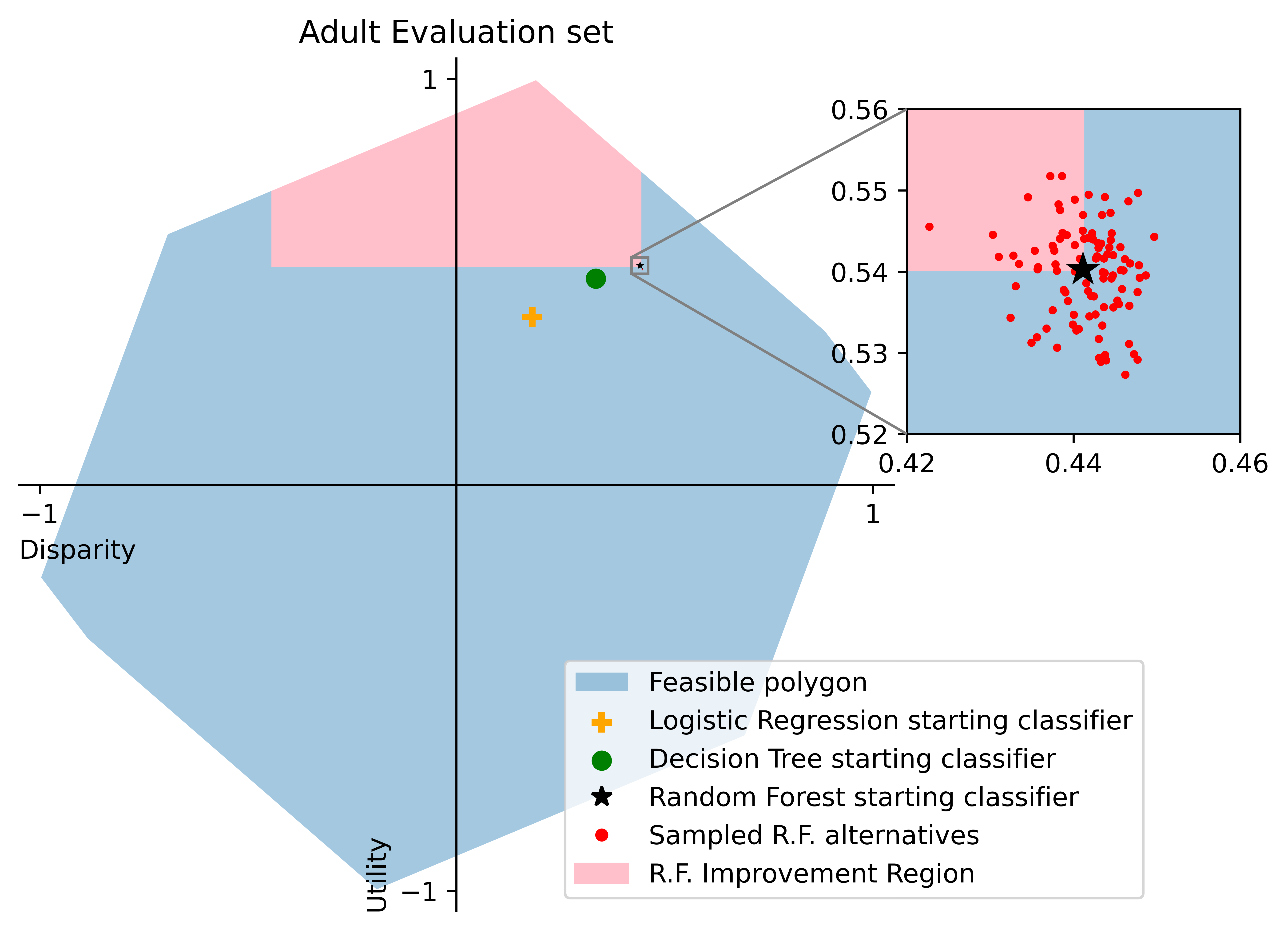}
        \end{subfigure}
        \begin{subfigure}{.414\linewidth}
            \includegraphics[width=\linewidth]{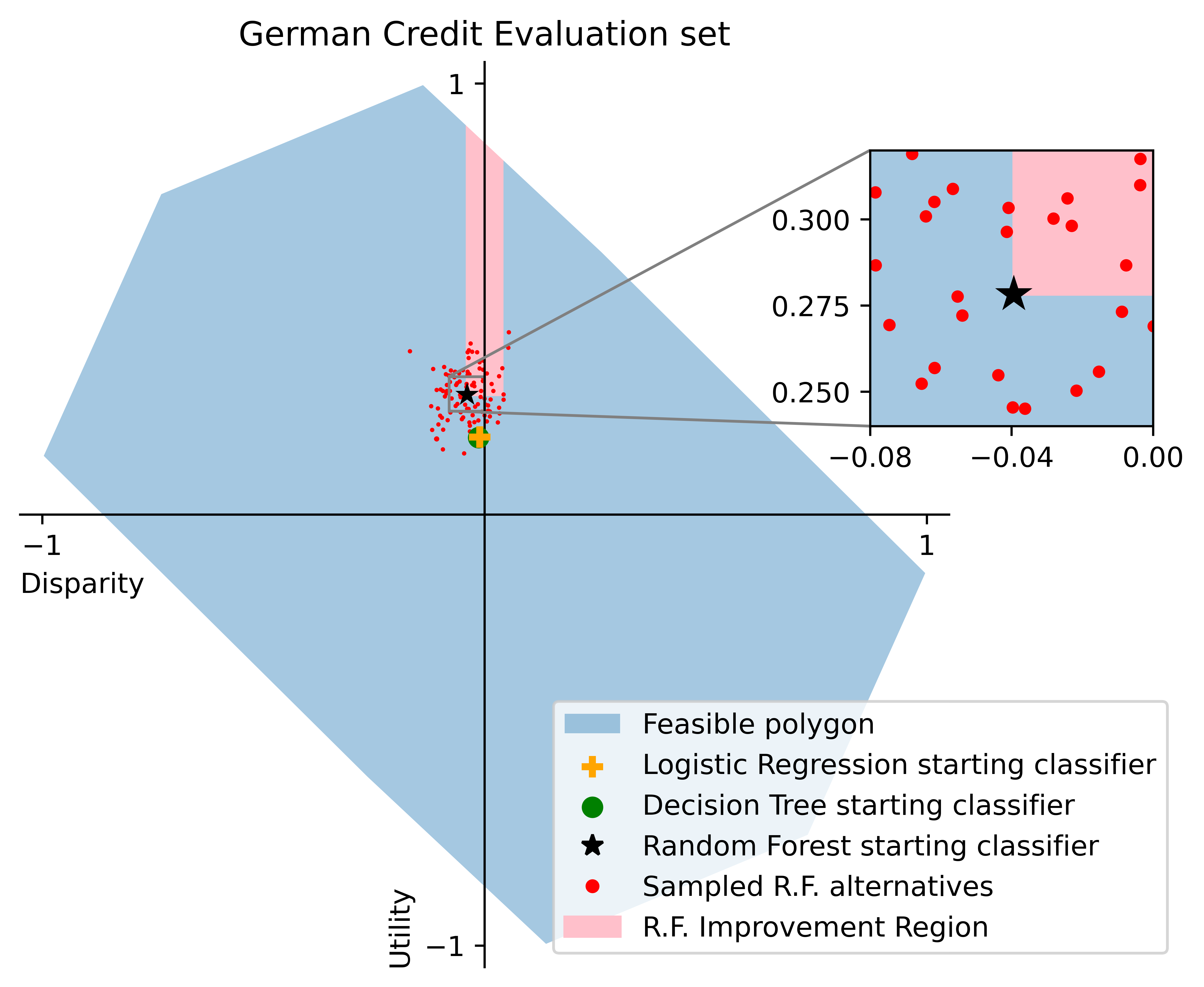}
        \end{subfigure} 
        \caption{Empirically observed achievable polygon using the $\texttt{Adult}$ and $\texttt{German Credit}$ evaluation datasets. In the case of \texttt{Adult} (left), a Random Forest classifier exhibits wide selection rate disparities. Randomly sampled alternatives exhibit small variations in utility and disparity. Analysis of performance on held out test data suggests that selecting the best alternative can statistically improve utility and reduce disparity. In the case of \texttt{German Credit} (right), an initial classifier does not exhibit wide disparities, leaving a narrow region for LDA improvement. No statistically significant disparity reduction is observed on held out test data.}
        \label{fig:empirics-polygon}
\end{figure*}

One feature of the procedure we test is that it checks, and rejects, 99 models in favor of the model that has lowest disparity on an evaluation dataset. We find this method, even when it consistently reduces disparity out-of-sample, can increase the likelihood of \textit{passing over} the model that ultimately achieves lowest disparities out-of-sample. The frequency that the selected model is disparity-optimal of those considered is displayed in the rightmost plot in Figure \ref{fig:empirics-rf-sample-adult}. The results suggest that reasonable and appropriate search procedures sometimes require considering and rejecting models that end up being less discriminatory on unseen data. These results imply that it may not be a good idea to punish firms for having \textit{considered and rejected} models that are ultimately less discriminatory, as has been suggested by some scholarship \cite{scherer2019applying}. Even if a firm wishes to search for LDAs in good faith, doing so can often mean considering and rejecting models that end up having lower disparity.

The methods tested are not the only way to systematically search for less discriminatory algorithms. We could imagine altering the loss function of an algorithm directly to optimize some weighted combination of disparity reduction and utility. We could similarly imagine encoding knowledge about the structure of disparity. If certain attributes are likely proxies for the protected group belonging, using context-specific modeling assumptions could yield better de-biasing interventions.
\section{Conclusion}

This paper is concerned with a legal concept that predates the era of pervasive machine learning. The less discriminatory alternative has emerged as an enticing notion because it could offer a way to use existing legal concepts to reduce discrimination, and potentially at a negligible cost to firms. 
By attempting to more formally define less discriminatory \textit{algorithms}, we highlight how statistical, mathematical, and computational nuances inform what we should expect to be able to achieve with them. 
Our work suggests a number of natural theoretical, empirical, and legal directions for future work. 
On the theoretical side, future work could study model comparisons in the absence of held-out data, particularly when models are drawn from different distributions. 
Empirical studies of the benefits of LDA searches in real-world employment and lending settings would provide insights into their relative costs and benefits.
And finally, legal scholarship could provide clarity on how to develop standards for ``reasonable'' LDAs. 
Ultimately, our work seeks to bridge the gap between legal and algorithmic perspectives on less discriminatory alternatives by developing formal definitions that remain sensitive to the nuances in both fields.

\section{Acknowledgments}
The authors would like to thank Emily Black, Mingwei Hsu, Pauline Kim, Jon Kleinberg, Logan Koepke, Helen Nissenbaum, Lesia Semenova, and the members of the Fairness, Accountability, Transparency, and Ethics group (FATE) at Microsoft Research, the AI, Policy and Practice group (AIPP) at Cornell University, and the Digital Life Initiative (DLI) at Cornell Tech for providing feedback on this work. The work is supported in part by a grant from the John D. and Catherine T. MacArthur Foundation. Much of the work was completed while Laufer was an intern at Miscrosoft Research. He is also supported by a LinkedIn-Bowers CIS PhD Fellowship, a DLI doctoral fellowship, and a SaTC NSF grant CNS-1704527. Any opinions, findings, conclusions, or recommendations expressed in this material are those of the authors and do not reflect the views of NSF or other funding agencies.

\bibliographystyle{plainnat}
\bibliography{bibliography}
\appendix
\section{Further Related Work}

The so-called fairness-accuracy trade-off has been discussed by a number of scholars, and there are instances where improving both is attainable algorithmically \cite{rodolfa2021empirical, wick2019unlocking, chen2018my, petersen2023path}. The extent to which some form of disparate treatment is required to achieve demographic parity on datasets with base rate differences is discussed by \cite{lohaus2022two, castelnovo2022clarification}. Reasoning about what notions of fairness (e.g. selection rates) and utility are attainable relate to, broadly, the opportunities and limits of phrasing the LDA concept as an optimization problem, and face constraints about what is easily measurable and accessible (see e.g., \cite{laufer2023optimization, raghavan2024limitations}). \citet{fuster2022predictably} have explored the distributional effects \textit{between-group} of introducing machine learning to credit markets. 
\section{Supplementary Materials on Mathematical Limits}
\label{app:polygon}

Here we provide supplementary matrials and proofs for claims made in Section \ref{sec:polygon}.

\subsection{Proof of Theorem \ref{claim:boundonfairness}}

\begin{proof}

The proof relies on reasoning about the Pareto-efficient region of achievable \textbf{U} and $\mathbf{\Delta}$ values. We say an alternative classifier \textbf{Pareto-dominates} a starting classifier if 1) the alternative's performance measures (e.g., utility and disparity) are greater than or equal to the starting classifier's, and 2) at least one of the alternative's performance measures is strictly greater than the starting classifier's. We define the set of \textbf{Pareto-efficient} classifiers as the set of classifiers that are not Pareto-dominated. 

A classifier in our setting is specified by four quantities: the proportion of the population of type $(g,y)$ selected, for $g\in \{1,2\}$ and $y \in \{+,-\}$. Phrased this way, we can use the fact that we're searching over a convex decision space (a cube representing four proportion values) and any perturbation or `swap' affects disparity and utility as a function of the $g$ and $y$ value of the swapped data, as well as the fixed population sizes $n_{g,y}$.

Observe that the perfect classifier, $h^*(x)=y$, is in the Pareto-efficient set because it uniquely maximizes utility. We can use this classifier as an `anchor point' and find the alternative classifiers that optimally trade off utility and disparity. Notice there are exactly four possible ways to perturb $h^*$: a) Decrease the proportion of $(1,+)$ members with positive label, b) decrease the proportion of $(2,+)$ members with positive label, c) increase the proportion of $(1,-)$ members with positive label, or d) increase the proportion of $(2,-)$ members with positive label. We can immediately rule out two of these perturbations by noticing they strictly harm both accuracy and disparity: (b) and (c) introduce errors that strictly leave group 2 with fewer selections or group 1 with more selections (respectively). This leaves two possible candidates for introducing errors that optimally trade-off accuracy and disparity.

Two claims are necessary for the remainder of the proof. \textbf{First}, observe that \textit{there are sufficient swaps of each type (a) and (d) to achieve zero-disparity using only one of these types}.  This observations follows directly from our definitions of demographic disparity and `advantaged group.' If group 1 is advantaged (given), then $\texttt{SR}_1(h^*)>\texttt{SR}_2(h^*) \rightarrow \frac{\sum_x h^*(x|g=1)}{n_1} > \frac{\sum_x h^*(x|g=2)}{n_2}$. Adding given constraints, we can say $1 \geq \frac{\sum_x h^*(x|g=1)}{n_1} > \frac{\sum_x h^*(x|g=2)}{n_2} \geq 0$. However, based on the feasible set of classifiers, we know that by exhausting all of swap (a) we'd have $\texttt{SR}_1=0$ without any change to $\texttt{SR}_2$. Similarly, if we exhaust all swaps of type (d) we'd have $\texttt{SR}_2=1$ at no change to $\texttt{SR}_1$. Because there are sufficient swaps to change the ordering of the group selection rates, we know that there are enough of each swap type to equalize selection rates across groups. 

\textbf{Second}, observe that with full information, each swap type has a constant ratio between utility reduction and disparity reduction, given as follows:
\begin{itemize}
    \item Swap type (a): Reducing the number of positively-labeled members of group 1 corresponds to a decrease in the true positive rate of $\frac{1}{n_+}$ meaning the unit change in $\mathbf{U}$ as a function of swap (a) is $-\frac{1}{n_+}$. The effect on disparity is $\frac{1}{n_1}$. So the ratio between utility reduction and disparity reduction is $\frac{n_1}{n_+}$.
    \item Swap type (d): Increasing the number of positively-labeled members of group 2 corresponds to an increase in the false positive rate of $\frac{1}{n_-}$ meaning the unit change in $\mathbf{U}$ as a function of swap (d) is $-\lambda\frac{1}{n_+}$. The effect on disparity is $\frac{1}{n_2}$. So the ratio between utility reduction and disparity reduction is $\frac{\lambda n_2}{n_-}$.
\end{itemize}

Taken together, the above two claims suggest that a constant (real, not necessarily integer) number of `swaps' draw the Pareto-efficient region trading off between utility and disparity, and this region is a line segment defined by two points in utility-disparity space: utility-optimal classifier $h^*$ and the utility-optimal classifier with zero disparity $h^f$, representing by $(\mathbf{\Delta,U})=(0,u^f)$, where $u^f= 1-\min\left[\frac{n_1}{n_+},\frac{\lambda n_2}{n_-}\right]\mathbf{\Delta}(h^*)$. 

For any starting classifier with utility less than or equal to $u^f$, the classifier $h^f$ represents a zero-disparity alternative meeting the utility requirements. Otherwise, the minimum-disparity classifier at the a given utility value corresponds to the point in the Pareto region at the same starting utility value.

\end{proof}
\section{Supplementary Materials on Computability}

Here we provide supplementary materials and proofs for claims made in Section \ref{sec:computation}. 

\subsection{Intuition for Hardness Result}

Here, we provide some intuition for the proof of the computational complexity of the full information LDA. Consider the specific case where our dataset is made up of discrete and finite data values. The set of possible values is 
 $\mathbf{X} = \{x_1,x_2,...,x_{|\mathbf{X}|}\}$. At its core, the mathematical task of finding the LDA is to `grab' (i.e., positively label) a subset of data values in $\mathbf{X}$ that meet certain utility and disparity requirements. Intuitively, we can imagine our data set as a collection of points. Every point $x_i \in \mathbf{X}$ has a disparity value $d_i$ and a utility value $u_i$. We are looking to find a subset $S^* \subseteq \mathbf{X}$ such that the classifier $h'(x)=\mathbf{1}[x\in S^*]$ minimizes disparity $\mathbf{\Delta}(h') = \left|\sum_{i \in S} d_i\right|$ subject to a single constraint $\mathbf{U}(h') = \sum_{i \in S^*} u_i \geq \mathbf{U}(h^0)$, where we can think of $u(h^0)$ as some constant utility cutoff or threshold that constrains the search. Notice that if we switch the sign on utility, we can think of each point as having a \textit{cost} $c_i:=-u_i$, and our utility constraint can be thought of as a budget $B:=-\mathbf{U}(h^0)$ that constrains how much we are able to spend.

We're left with the following optimization problem formulation for finding the LDA:
$$\min_{S \subseteq \mathbf{X}} \left|\sum_{i} d_i \mathbf{1}[x_i \in S] \right| \ \ \text{s.t. } \sum_{i=0}^N c_i \mathbf{1}[x_i \in S] \leq B. 
$$

If there exists a solution to the above optimization problem, then it would qualify as an LDA and solve the problem put forward in Definition \ref{def:full-info-LDA}. However, solving this problem may be computationally onerous. Notice the optimization above looks almost identical to the 0-1 Knapsack Problem, which is known to be NP-hard. A noticeable difference in our case, however, is that we take an absolute value over the objective function. The full proof (in Appendix \ref{app:NP-hard-proof}) involves a reduction from a related problem, the Subset Sum Problem.

\subsection{Proof of Theorem \ref{thm:NP-complete}}
\label{app:NP-hard-proof}

\begin{proof}

    We provide a polynomial-time reduction from the subset sum problem (known to be NP-complete) to the LDA problem. We use a particular case of the subset-sum problem (preserving the NP-hard property), defined below:
    
    \begin{definition}[Subset sum problem]
        Given a set $\mathcal{W}$ of integers $\{w_1,w_2,...,w_{|\mathcal{W}|}\}$, find a subset whose values sum to $0$.
    \end{definition}

    Suppose we have a black box that computes any LDA problem. We're given an instance $\langle \mathcal{W} \rangle$ of the subset sum problem. Our goal is to build an instance of the LDA problem where the solution enables us to solve the subset sum problem efficiently.

    \paragraph*{Building a population.} To start the proof, we will build a population of people that collectively compose our distribution. We can construct this population however we want, and at the end, we'll have an LDA search over this population where the LDA classifier (if one exists) will precisely identify the subset of integers that solve subset sum. The people will be split cleanly into categories, denoted by a categorical data value $x \in \mathbf{X}$, which takes values $\{1,2,...\}$. We construct the population as follows: cycle through the values in our subset sum problem $\{w_1,...,w_{|\mathcal{W}|}\}$. For each element $i$, if the integer $w_i$ is positive, add $2w_i$ people of group 1 to our population, each of type $x=i$. Otherwise, if integer $w_i$ is negative, add $2|w_i|$ people  of group 2 to our population, each of type $x=i$. We end up with a population that looks something like Table \ref{tab:preliminary-pop}.

    \begin{table}[h!]
        \centering
        \begin{tabular}{c|c|c}
    \hline
    \hline
        $\mathbf{X}$ & $G$ & $n_g(x)$ \\
        \hline
        1 & 1 & $2w_1$ \\
        1 & 2 & 0\\
        2 & 1 & 0  \\
        2 & 2 & $2|w_2|$  \\
        \vdots & \vdots & \vdots   \\
        $|\mathcal{W}|$ & 1 & $2w_{|\mathcal{W}|}$  \\
        $|\mathcal{W}|$ & 2 & 0  \\
        \hline
        \hline
    \end{tabular}
        \caption{Preliminary example of a population}
        \label{tab:preliminary-pop}
    \end{table}
    
    In the example in Table \ref{tab:preliminary-pop}, the first subset sum value $w_1$ is positive, the second is negative, and the last is positive, which means that the people are assigned to groups 1, 2, and 1, respectively. We use $n_g(x)$ to refer to the number of people of data type $x$ and group $g$.

    To complete our population, we add two more data values, $x^*$ and $x^{**}$. The first data value, $x^*$, contains a \textit{single} person of group 1. The second data value, $x^{**}$, will equalize the total number of people of each group: we take the absolute difference between all the people we've created in group 1 and all the people we've created in group 2, which is equal to $\left|2\sum_i w_i + 1\right| $. We'll add this many people with data type $x^{**}$ to whichever group has fewer people to equalize the overall number of people in each group. Finally, if we want, we can add additional people to the population, but they must have data type $x^{**}$ and an equal number must be added from each group, to preserve balance. So we'll say formally that $2\alpha$ people may be added for some non-negative integer $\alpha$ as long as exactly $\alpha$ are added from each of groups 1 and 2. The total number of people in our population is therefore $N:= 2\sum_i |w_i| + 1 +  \left|2\sum_i w_i + 1\right| + 2\alpha$.

    \begin{table*}[ht]
        \centering
        \begin{tabular}{c|c|c|c|c|c|c}
        \hline
        \hline
        $\mathbf{X}$ & $G$ & $n_g(x)$ & $\rho_g(x)
        $ & $\sigma(x)
        $  & $U(x)$ & $h^0(x)$\\
        \hline
        \rule{0pt}{4ex}    
        1 & \makecell{1\\2} & \makecell{$2w_1$\\0} & \makecell{$\frac{4}{N}w_1$\\0} & 1 & $\frac{2}{N}w_1$ & 0\\
        2 & \makecell{1\\2} & \makecell{0\\$2|w_2|$} & \makecell{0\\$\frac{4}{N}|w_2|$} & 1 & $\frac{2}{N}|w_2|$ & 0\\
        \vdots & \vdots & \vdots & \vdots & \vdots & \vdots & \vdots   \\
        $|\mathcal{W}|$ & \makecell{1\\2} & \makecell{$2w_{|\mathcal{W}|}$\\0} & \makecell{$\frac{4}{N}w_{|\mathcal{W}|}$\\0} & 1 & $\frac{2}{N}w_{|\mathcal{W}|}$ & 0 \\
        \hline
        \rule{0pt}{4ex}    
        $x^*$ & \makecell{1\\2} & \makecell{1\\0} & \makecell{$\frac{2}{N}$\\0} & 1 & $\frac{1}{N}$  & 1 \\
        \hline
        \rule{0pt}{4ex}
        $x^{**}$ & \makecell{1\\2} & \makecell{$\alpha$\\\tiny{$\left|2\sum_i w_i + 1\right| + \alpha$}}  & \makecell{$\frac{2\alpha}{N}$\\\tiny{$\frac{2}{N}\left(\left|2\sum_i w_i + 1\right| + \alpha\right)$}} & 0 & \tiny{$-\lambda \frac{1}{N}\left(\left|2\sum_i w_i + 1\right| + 2\alpha\right)$} & 0\\
        \hline
        \hline
    \end{tabular}
        \caption{Table representing the constructed population.}
        \label{tab:hardness-result-population}
    \end{table*}

    Now, our population is fully specified. Since we know the total number of people ($N$), we are now able to specify the frequency (or probability density) of each data value in our population, both overall and within group --- the distribution is specified by $\rho_g(x)$. Our last step is to specify the outcome labels (or base rates) 
for each data value in our population. For this, we simply assign $\sigma(x)=1$ for all $x\neq x^{**}$, and $\sigma(x^{**})=0$.

    Our completed population is depicted in Table \ref{tab:hardness-result-population}.

    Based on the way we've set up this population, we want the LDA search to select exactly the values in $\mathbf{X}$ that correspond to the indices of $\mathcal{W}$ that solve the subset sum instance (if and only if a solution exists). Therefore, we want the LDA search to never select our slack parameters $x^*, x^{**}$. The first slack parameter will never be selected because the density of the group-1 population in $x^*$ is $\frac{2}{N}$, which is half the minimum difference between any two densities of group 2, so a LDA with disparity 0 could never include data $x^*$. The second slack parameter will never be selected as long as the utility of selecting it is negative with greater magnitude than the sum of all other utilities in the entire search, since in that case, including it would always yield a utility less than $0<U(h^0)=\frac{1}{N}$. Thus we require: $U(x^{**})< -\left|\sum_{x \neq x^{**}}U(x)\right|$. We do some arithmetic on this condition and simplify to a requirement on the value $\alpha$: 
    \begin{eqnarray*}
        U(x^{**})< -\left|\sum_{x \neq x^{**}}U(x)\right| \\
        -\lambda \frac{1}{N}\left(\left|2\sum_i w_i + 1\right| + 2\alpha\right) < -\frac{2}{N} \sum_i |w_i| - \frac{1}{N}\\
        \lambda \left|2\sum_i w_i + 1\right| + 2\lambda \alpha > 2 \sum_i |w_i| + 1 \\
        2\lambda \alpha > 2 \sum_i |w_i| + 1 - \lambda \left|2\sum_i w_i + 1\right| \\
        \alpha > \frac{1}{\lambda}\sum_i |w_i| + \frac{1}{2\lambda} - \left|\sum_i w_i + \frac{1}{2}\right|  \\
        \alpha > \frac{1}{\lambda} \left(\sum_i |w_i| + \frac{1}{2}\right)
    \end{eqnarray*}
    The above inequality implies that our hardness result does not depend on any particular value of $\lambda$. We can complete the reduction for any feasible value $\lambda>0$, as long as we set $\alpha$ to be sufficiently large. Now we're in a position to state the following Lemma, which represents the rest of what is needed for our proof:

    \begin{lemma}
        Consider the LDA problem $\langle \mathbf{X},\rho_g, \sigma, h^0 \rangle$, for any given value $\lambda>0$. If an LDA does not exist, there is no solution to subset sum problem $\langle \mathcal{W}\rangle$. If an LDA $h^*$ does exist, the indices of the data for which $h^*(x_i)=1$ are the subset $\{w_i\} \subseteq \mathcal{W}$ which sums to 0.
    \end{lemma}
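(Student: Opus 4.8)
The plan is to show that, for the instance constructed above, the two inequalities defining an LDA are together equivalent to the existence of a zero-sum subset of $\mathcal{W}$, and that the support of any LDA exhibits such a subset. Recall that a classifier here is a function of $x$ alone, hence is determined by the set $S \subseteq \mathbf{X}$ of data values it labels positively, and that both objectives are additive over $S$: $\mathbf{U}(h) = \sum_{x \in S} U(x)$ with the per-value utilities of Table~\ref{tab:hardness-result-population}, and $\mathbf{\Delta}(h) = \texttt{SR}_1(h) - \texttt{SR}_2(h) = \sum_{x \in S}\bigl(\rho_1(x) - \rho_2(x)\bigr)$. First I would pin down the baseline: $h^0$ selects only $x^*$, so $\mathbf{U}(h^0) = U(x^*) = \tfrac1N > 0$ and $\mathbf{\Delta}(h^0) = \rho_1(x^*) = \tfrac2N$ (no group-$2$ member has data value $x^*$); thus an LDA is exactly a set $S$ with $\mathbf{U}(h) \ge \tfrac1N$ and $|\mathbf{\Delta}(h)| < \tfrac2N$.

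The crux of the reduction is the disparity contribution of a ``real'' data value $i \in \{1,\dots,|\mathcal{W}|\}$. If $w_i > 0$, all $2w_i$ people of this type are in group~$1$, so selecting $i$ shifts $\mathbf{\Delta}$ by $+\rho_1 = +\tfrac{4w_i}{N}$; if $w_i < 0$, all $2|w_i|$ of them are in group~$2$, so selecting $i$ shifts $\mathbf{\Delta}$ by $-\rho_2 = -\tfrac{4|w_i|}{N} = \tfrac{4w_i}{N}$. In both cases the shift is exactly $\tfrac{4w_i}{N}$, and the utility shift is $\tfrac{2|w_i|}{N} \ge 0$. Selecting $x^*$ shifts $\mathbf{\Delta}$ by $+\tfrac2N$ and $\mathbf{U}$ by $+\tfrac1N$; selecting $x^{**}$ shifts $\mathbf{\Delta}$ by $-\tfrac2N\bigl|2\sum_i w_i + 1\bigr|$ and $\mathbf{U}$ by $U(x^{**}) < 0$.

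The heart of the argument is then to eliminate the two gadget values from the support of any LDA. For $x^{**}$: $\alpha$ was chosen precisely so that $U(x^{**}) < -\sum_{x \ne x^{**}} U(x)$ (achievable for every $\lambda > 0$ by taking $\alpha$ large, as derived above, using that all other $U(x)$ are nonnegative), so any $S \ni x^{**}$ has $\mathbf{U}(h) \le U(x^{**}) + \sum_{x \ne x^{**}} U(x) < 0 < \tfrac1N = \mathbf{U}(h^0)$ and fails the utility requirement. For $x^*$: once $x^{**}$ is ruled out, every classifier has $\mathbf{\Delta}(h) = \tfrac2N\bigl(2\sum_{i \in S'} w_i + \epsilon\bigr)$, where $S' = S \cap \{1,\dots,|\mathcal{W}|\}$ and $\epsilon = \mathbf{1}[x^* \in S]$; if $\epsilon = 1$ the bracketed integer is odd, hence nonzero, so $|\mathbf{\Delta}(h)| \ge \tfrac2N$ and the strict disparity requirement fails. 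Hence every LDA is supported on some $S \subseteq \{1,\dots,|\mathcal{W}|\}$.

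Finally I would close both directions. For such an $S$ we have $\mathbf{U}(h) = \tfrac2N\sum_{i\in S}|w_i|$ and $|\mathbf{\Delta}(h)| = \tfrac4N\bigl|\sum_{i\in S} w_i\bigr|$; the constraint $\mathbf{U}(h) \ge \tfrac1N$ forces $\sum_{i\in S}|w_i| \ge 1$, so $S$ is nonempty (we may assume $\mathcal{W}$ has no zero entry, else the subset sum instance is trivially a yes-instance), while $|\mathbf{\Delta}(h)| < \tfrac2N$ forces the integer $\sum_{i\in S} w_i$ to have absolute value below $\tfrac12$, i.e., to equal $0$. Thus the index set underlying any LDA is a nonempty zero-sum subset of $\mathcal{W}$, which is the second assertion; and conversely, given a nonempty index set $S^* \subseteq \{1,\dots,|\mathcal{W}|\}$ with $\sum_{i\in S^*} w_i = 0$, the classifier $\mathbf{1}[x \in S^*]$ has $\mathbf{\Delta} = \tfrac4N\sum_{i\in S^*} w_i = 0 < |\mathbf{\Delta}(h^0)|$ and $\mathbf{U} = \tfrac2N\sum_{i\in S^*}|w_i| \ge \tfrac2N > \mathbf{U}(h^0)$, hence is an LDA---equivalently, the absence of an LDA implies the absence of a solution, which is the first assertion. (Together with the immediate membership in NP, this yields Theorem~\ref{thm:NP-complete}; and the reduction runs in polynomial time, since the instance is specified by the $O(|\mathcal{W}|)$ data values together with their rational densities and labels, all of polynomial bit-length, so no exponentially large population need be materialized.) The step I expect to be most delicate is making the two gadget eliminations interlock---the $x^{**}$ utility bound together with the $x^*$ parity argument---so that LDA candidates live exactly among the subsets of $\{1,\dots,|\mathcal{W}|\}$, and checking that the asymmetry in the LDA definition (strict $|\mathbf{\Delta}(h')| < |\mathbf{\Delta}(h^0)|$ but non-strict $\mathbf{U}(h') \ge \mathbf{U}(h^0)$) meshes with integrality so that ``disparity below $\tfrac2N$'' collapses precisely to ``sum zero.''
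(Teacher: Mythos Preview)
Your proposal is correct and follows essentially the same approach as the paper: compute the baseline $\mathbf{U}(h^0)$ and $\mathbf{\Delta}(h^0)$, eliminate $x^{**}$ via the utility bound and $x^*$ via the parity/divisibility argument on disparities, then reduce the remaining candidates to subsets of $\{1,\dots,|\mathcal{W}|\}$ whose disparity is $\tfrac{4}{N}\sum_{i\in S}w_i$ and whose utility is $\tfrac{2}{N}\sum_{i\in S}|w_i|$, so that the LDA constraints collapse exactly to ``nonempty zero-sum subset.'' If anything, your write-up is slightly more complete than the paper's---you make the converse direction and the nonemptiness (from the utility constraint) explicit, and you correctly compute $|\mathbf{\Delta}(h^0)|=\tfrac{2}{N}$, whereas the paper's Claim~3 states $\tfrac{1}{N}$ (a harmless slip, since either value lies strictly between $0$ and $\tfrac{4}{N}$).
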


    We prove this Lemma with the following sequence of claims:

    \begin{claim}
        The LDA classifier will never select $x^{**}$.
    \end{claim}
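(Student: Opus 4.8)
The plan is to argue by contradiction, leaning on the defining requirement of an LDA (Definition~\ref{def:full-info-LDA}): a candidate must have utility at least $\mathbf{U}(h^0)$, which in the constructed instance equals $\tfrac{1}{N} > 0$. First I would record two structural facts about the construction. Since a classifier in this section is a deterministic function $h(x)$ of the data value $x$ alone, ``selecting $x^{**}$'' is all-or-nothing: either every person with data value $x^{**}$ (from both groups) is selected or none is. And since $\texttt{TPR}$ and $\texttt{FPR}$ are each linear in the set of selected data values --- with deterministic labels $\sigma(x)$ per data value --- utility decomposes additively, $\mathbf{U}(h) = \sum_{x : h(x) = 1} U(x)$, where the per-value contributions $U(x)$ are exactly those recorded in Table~\ref{tab:hardness-result-population}.

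Next I would observe that every $U(x)$ with $x \neq x^{**}$ is non-negative (these data values carry only positively-labeled people, so they contribute only to $\texttt{TPR}$), while $U(x^{**})$ is strictly negative. Hence, if a classifier $h'$ has $h'(x^{**}) = 1$, then
\begin{equation*}
\mathbf{U}(h') \;\le\; U(x^{**}) + \sum_{x \neq x^{**}} U(x) \;=\; U(x^{**}) + \Bigl|\sum_{x \neq x^{**}} U(x)\Bigr|.
\end{equation*}
By the choice of the slack parameter $\alpha > \tfrac{1}{\lambda}\bigl(\sum_i |w_i| + \tfrac12\bigr)$ established in the displayed chain of inequalities just above the Lemma, we have $U(x^{**}) < -\bigl|\sum_{x \neq x^{**}} U(x)\bigr|$, and therefore $\mathbf{U}(h') < 0 < \tfrac{1}{N} = \mathbf{U}(h^0)$. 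So $h'$ fails the utility condition of an LDA, and consequently any LDA $h^*$ must satisfy $h^*(x^{**}) = 0$.

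The main thing to be careful about --- rather than any genuine obstacle --- is the bookkeeping: confirming that the additive decomposition of $\mathbf{U}$ over data values is legitimate (it is, for the reason above), that all $U(x)$ with $x \neq x^{**}$ are indeed non-negative given the assignment $\sigma(x)=1$ for those values, and that the simplification of $U(x^{**}) < -\bigl|\sum_{x\neq x^{**}} U(x)\bigr|$ into the stated lower bound on $\alpha$ is exactly the arithmetic carried out preceding the Lemma. There is no combinatorial difficulty here; the claim is purely a magnitude comparison that isolates the slack value $x^{**}$ from the remainder of the search space.
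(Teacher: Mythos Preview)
Your proposal is correct and follows the same approach as the paper: both argue that the negative utility contribution of $x^{**}$ dominates the positive contributions of all other data values (by the choice of $\alpha$), so any classifier selecting $x^{**}$ has utility below $0 < \tfrac{1}{N} = \mathbf{U}(h^0)$ and thus fails the LDA utility requirement. Your version is more explicit about the additive decomposition and the non-negativity of the other $U(x)$, but the underlying argument is identical to the paper's terse three-sentence proof.
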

    \begin{proof}
        The utility from labeling $x^{**}$ positively is a negative value that is strictly less than the sum of all other utilities. The utility of our starting classifier is $\frac{1}{N}>0$. So, no classifier can positively label $x^{**}$ and have at least as good utility as the starting classifier.
    \end{proof}
    
    \begin{claim}
        The LDA classifier will never select $x^*$.
    \end{claim}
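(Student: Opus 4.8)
The plan is to combine the previous claim — that the LDA classifier never selects $x^{**}$ — with a short modular-arithmetic argument about which selection rates, and hence which disparities, are attainable once $x^{**}$ is off the table. First I would record the relevant baseline quantities. Since $h^0$ positively labels only the data value $x^*$, and $x^*$ consists of a single group-$1$ individual, we have $\texttt{SR}_1(h^0) = \rho_1(x^*) = \tfrac{2}{N}$ and $\texttt{SR}_2(h^0) = 0$, so $|\mathbf{\Delta}(h^0)| = \tfrac{2}{N}$. By Definition~\ref{def:full-info-LDA}, any LDA $h'$ must therefore satisfy $|\mathbf{\Delta}(h')| < \tfrac{2}{N}$.

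Next I would pin down the ``granularity'' of achievable disparities. By the previous claim we may assume $h'$ does not select $x^{**}$, so the set $S$ of data values it selects is contained in $\{1,\dots,|\mathcal{W}|\}\cup\{x^*\}$. Each integer-indexed data value $i$ lives entirely in a single group — group $1$ with density $\tfrac{4 w_i}{N}$ when $w_i>0$, group $2$ with density $\tfrac{4|w_i|}{N}$ when $w_i<0$ — so it contributes an integer multiple of $\tfrac{4}{N}$ to exactly one of the two group selection rates, while $x^*$ contributes exactly $\tfrac{2}{N}$ to $\texttt{SR}_1$ and nothing to $\texttt{SR}_2$. Consequently $\texttt{SR}_2(h')$ is always an integer multiple of $\tfrac{4}{N}$, and $\texttt{SR}_1(h')$ is an integer multiple of $\tfrac{4}{N}$ plus $\tfrac{2}{N}\cdot\mathbf{1}[x^*\in S]$.

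The conclusion then follows by cases. If $x^*\in S$, then $\mathbf{\Delta}(h') = \texttt{SR}_1(h') - \texttt{SR}_2(h')$ is of the form $\tfrac{4k+2}{N}$ for some integer $k$; since $|4k+2| = 2|2k+1|\ge 2$, this forces $|\mathbf{\Delta}(h')| \ge \tfrac{2}{N} = |\mathbf{\Delta}(h^0)|$, contradicting the strict-improvement requirement. Hence $x^*\notin S$, i.e., the LDA never selects $x^*$.

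I expect this argument to be essentially routine; the only subtlety — and the reason the claims must be proved in this order — is that the counting argument breaks if $x^{**}$ is allowed in $S$, because its group-$2$ density $\tfrac{2}{N}\left(\left|2\sum_i w_i+1\right|+\alpha\right)$ need not be a multiple of $\tfrac{4}{N}$. So the single ``obstacle'' is to state explicitly that we invoke the previously established claim to exclude $x^{**}$ before running the divisibility computation, and to note cleanly that no quantity congruent to $\tfrac{2}{N}$ modulo $\tfrac{4}{N}$ can have absolute value smaller than $\tfrac{2}{N}$.
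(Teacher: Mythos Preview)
Your proposal is correct and follows essentially the same approach as the paper: invoke the previous claim to exclude $x^{**}$, observe that every remaining integer-indexed data value contributes a multiple of $\tfrac{4}{N}$ to the disparity while $x^*$ alone contributes $\tfrac{2}{N}$, and conclude that selecting $x^*$ forces $|\mathbf{\Delta}(h')| \ge \tfrac{2}{N} = |\mathbf{\Delta}(h^0)|$. Your write-up is in fact more explicit than the paper's, which compresses the modular-arithmetic step into a single sentence.
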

    \begin{proof}
        We've already established the LDA will never positively label $x^{**}$. If there exists an LDA, the disparity must be (strictly) less than the original classifier. But notice that every other data point has disparity value divisible by $\frac{4}{N}$. As the sole data point with disparity value $\frac{2}{N}$, including $x^*$ must inevitably yield disparity at least $\frac{2}{N}$ rendering an LDA impossible.
    \end{proof}

    \begin{claim}
        An LDA exists if and only if the disparity values of positively-labeled data sum to 0.
    \end{claim}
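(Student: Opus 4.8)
The plan is to prove the two directions of the biconditional by first using the two preceding claims to restrict attention to classifiers that positively label only the ``real'' data values $1,\dots,|\mathcal{W}|$, and then translating the LDA's strict disparity requirement into an exact integer condition on the selected disparity values. First I would record the quantities the argument needs. Since the constructed population is balanced, $n_1 = n_2 = N/2$; since $h^0$ positively labels only the single group-$1$ point $x^*$, we have $\texttt{SR}_1(h^0) = 2/N$, $\texttt{SR}_2(h^0) = 0$, hence $|\mathbf{\Delta}(h^0)| = 2/N$, and $h^0$ makes exactly one true positive and no false positive. For a real data value $i$, positively labeling it adds $2|w_i|$ people to group $1$ (if $w_i > 0$) or to group $2$ (if $w_i < 0$), shifting that group's selection rate by $4|w_i|/N$; because positive weights were placed in group $1$ and negative weights in group $2$, in either case this changes $\mathbf{\Delta} = \texttt{SR}_1 - \texttt{SR}_2$ by exactly $d_i := 4w_i/N$, the ``disparity value'' of data value $i$. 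Distinct real data values correspond to disjoint sets of people, so for any classifier of the form $h' = \mathbf{1}[\,\cdot \in S\,]$ with $S \subseteq \{1,\dots,|\mathcal{W}|\}$ these contributions add: $\mathbf{\Delta}(h') = \sum_{i \in S} d_i$, and $h'$ has no false positives since every real data value has $\sigma = 1$.

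For the forward direction I would argue as follows: suppose an LDA $h'$ exists. By the two preceding claims it positively labels neither $x^{**}$ nor $x^*$, so $h' = \mathbf{1}[\,\cdot \in S\,]$ for some $S \subseteq \{1,\dots,|\mathcal{W}|\}$, and therefore $\mathbf{\Delta}(h') = \sum_{i\in S} d_i = \frac{4}{N}\sum_{i\in S} w_i$, an integer multiple of $4/N$. The LDA condition forces $|\mathbf{\Delta}(h')| < |\mathbf{\Delta}(h^0)| = 2/N$, and the only integer multiple of $4/N$ with absolute value strictly below $2/N$ is $0$; hence the disparity values of the positively-labeled data sum to $0$, equivalently $\sum_{i\in S} w_i = 0$.

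For the reverse direction, given a nonempty $S \subseteq \{1,\dots,|\mathcal{W}|\}$ with $\sum_{i\in S} d_i = 0$ (discarding indices with $w_i = 0$, which contribute no people, I may assume all indices in $S$ have $w_i \neq 0$, so $|S| \ge 2$), I would take $h' = \mathbf{1}[\,\cdot \in S\,]$ and verify both LDA conditions. Disparity: $\mathbf{\Delta}(h') = \sum_{i\in S} d_i = 0$, so $|\mathbf{\Delta}(h')| = 0 < 2/N = |\mathbf{\Delta}(h^0)|$. Utility: $h'$ positively labels only people with $\sigma = 1$, so it has zero false positives and strictly more true positives than $h^0$ (which has one); since $\mathbf{U}(\cdot;\lambda) = \texttt{TPR} - \lambda\texttt{FPR}$ is increasing in the set of true positives at fixed false positives, $\mathbf{U}(h') > \mathbf{U}(h^0)$. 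Thus $h'$ satisfies Definition~\ref{def:full-info-LDA}, and its positively-labeled indices form exactly a nonempty subset of $\mathcal{W}$ summing to $0$, which is what completes the reduction from subset sum.

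The step I expect to carry the real weight is the granularity observation in the forward direction: the baseline disparity $2/N$ lies strictly between $0$ and the spacing $4/N$ of the achievable-disparity grid, so ``strictly less discriminatory than $h^0$'' collapses to ``exactly zero disparity,'' which is precisely a subset-sum-to-zero statement; engineering this gap is exactly what the factor-of-two scaling ($2|w_i|$ people per weight, plus the lone point $x^*$) in the construction is for. The remaining ingredients are routine: that an LDA avoids $x^*$ and $x^{**}$ (the two preceding claims), additivity of disparity contributions over disjoint populations, and monotonicity of $\mathbf{U}$ under adding true positives. Two hygiene points I would flag: the reduction targets the nonempty-subset version of subset sum (the empty subset sums to $0$, but the induced classifier has $\mathbf{U} = 0 < \mathbf{U}(h^0)$, hence is not an LDA), which is the NP-complete version; and NP-membership of the full-information LDA problem is immediate, since over the finite value set $\mathbf{X}$ a candidate classifier is a polynomial-size object whose utility and disparity are computable in polynomial time from $\sigma$ and $\rho_g$.
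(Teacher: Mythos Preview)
Your proof is correct and follows the same granularity argument as the paper: an LDA must avoid $x^*$ and $x^{**}$ by the preceding claims, so its disparity is an integer multiple of $4/N$, and since the baseline disparity $|\mathbf{\Delta}(h^0)|=2/N$ (which you compute correctly; the paper writes $1/N$) lies strictly below $4/N$, the only way to beat it is to hit zero. You are more thorough than the paper in explicitly verifying the reverse direction (checking the utility condition via the true-positive count) and in flagging the nonempty-subset hygiene, but the core approach is identical.
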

    \begin{proof}
        The total disparity of any classifier can by calculated by summing the densities $\rho_1(x)-\rho_2(x)$ of all positively-classified data and then taking an absolute value. Notice that aside from our slack variables $\{x^*,x^{**}\}$, all other data points' disparities are divisible by $\frac{4}{N}$. Thus, any sum of disparities from this group of data must also be divisible by $\frac{4}{N}$, meaning the least discriminatory grouping must have one of the following disparities: $0, \frac{4}{N}, \frac{8}{N}, \frac{12}{N},...$ Now, notice the disparity of our starting classifier $h^0$ is $\frac{1}{N}$. This value is strictly between 0 and the minimum non-zero disparity given our setup, $\frac{4}{N}$. Thus, any less-discriminatory classifier that is attainable from the non-slack variables ($x \in \{1,2,...,|\mathcal{W}|\}$) must have disparity values that sum to 0. Finally, we've already shown that no LDA solution will include a positive label for the slack variables $x^*, x^{**}$, so this concludes the proof.
    \end{proof}

    \begin{claim}
        The subset sum problem is invariant to a scaling factor $k$.
    \end{claim}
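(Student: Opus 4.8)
The plan is to observe that the zero-sum version of Subset Sum is insensitive to a uniform rescaling of its input, and then record the bookkeeping facts needed to invoke this inside the reduction: that rescaling keeps the instance integral, preserves the signs of the entries, and is a polynomial-time transformation.

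First I would fix a Subset Sum instance $\mathcal{W} = \{w_1, \dots, w_m\}$ and a positive integer $k$, and form the rescaled instance $k\mathcal{W} := \{kw_1, \dots, kw_m\}$. For any index set $S \subseteq \{1,\dots,m\}$, linearity of summation gives $\sum_{i\in S} kw_i = k\sum_{i\in S} w_i$, and since $k \neq 0$ the left side vanishes if and only if $\sum_{i\in S} w_i = 0$. Hence a given index set witnesses a solution for $\mathcal{W}$ exactly when it witnesses a solution for $k\mathcal{W}$; in particular one instance is a yes-instance iff the other is, and the solution sets coincide. (A rational scale factor $k = p/q$ reduces to this case after multiplying through by $q$.)

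Second I would check the routine points that let us use the claim in the construction of the population: multiplying integers by the positive integer $k$ yields integers and leaves every sign unchanged, so $k\mathcal{W}$ is again a legal instance with the same positive/negative split --- and hence the same group-$1$/group-$2$ assignment --- as $\mathcal{W}$; and each $|kw_i| = k|w_i|$ increases bit-length by only $O(\log k)$, so the rescaling is polynomial time for the (polynomially bounded) choices of $k$ we need. Consequently, whenever convenient, we may replace $\mathcal{W}$ by a rescaled copy $k\mathcal{W}$ before building the population --- for instance to force the head counts, the balancing count, and the slack parameter $\alpha$ to land on the integers the construction requires --- without changing whether a zero-summing subset, and therefore an LDA, exists.

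I do not anticipate a genuine obstacle: the mathematical content is the one-line identity $\sum_{i\in S} kw_i = k\sum_{i\in S} w_i$ together with $k \neq 0$; the only things worth stating with care are the ``iff'' over index sets and the integrality/sign/polynomiality remarks that justify slotting the rescaled instance into the NP-hardness reduction.
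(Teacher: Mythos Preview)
Your first paragraph is correct and is exactly the paper's argument: linearity gives $\sum_{i\in S} kw_i = k\sum_{i\in S} w_i$, and for $k\neq 0$ this vanishes iff $\sum_{i\in S} w_i = 0$. That one line is the entire proof in the paper.

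Your second paragraph, however, misreads how the claim is used in the reduction. The paper does not pre-scale the Subset Sum instance by a positive integer before building the population; rather, the population is built directly from $\mathcal{W}$, and the resulting disparity contributions of the data values $x_1,\dots,x_{|\mathcal{W}|}$ are automatically the $w_i$ scaled by the \emph{real, non-integer} factor $k=\tfrac{2}{N}$. The claim is then invoked in the reverse direction --- to pass from a zero-disparity subset in the LDA instance back to a zero-sum subset of the original $\mathcal{W}$. Consequently the integrality, sign-preservation, and polynomial-time bookkeeping you add is unnecessary (the paper explicitly notes the claim holds ``for any real-valued $k$''), and the remark about using rescaling to force head counts and $\alpha$ to be integers does not correspond to anything the construction actually requires.
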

    \begin{proof}
         This is a property of any summation. $a+b=c \iff ka +kb = kc$. In our case, the target is $0$ so $k*0=0$. The scaling factor we use is $k=\frac{2}{N}$ but this is easy to verify for any real-valued $k$.
    \end{proof}

    This concludes the reduction.

    We've thus shown the full-information LDA problem is at least as hard as Subset Sum. Showing that the full-information LDA problem is NP-complete further requires that the problem is in NP. We know this is true because, given a candidate solution and a problem statement, we can simply check whether the solution has lower demographic disparity and greater or equal utility (which both require only taking a sum over the population).
\end{proof}

\subsection{A $(1+\epsilon)$ Approximation}
\label{app:LDA-approximation}

We have shown that the task of finding a less discriminatory alternative classifier is NP-hard, even in cases where we can access the true probability of a positive outcome for every data point. However, hardness does not imply impossibility, and there remain strategies for identifying LDAs. Here, we ask, what if we're okay with identifying a less discriminatory alternative so long as the starting classifier $h^0$ is at least some minimum distance from the \textit{least} discriminatory alternative classifier?

Hypothetically, consider a case where the difference in disparity between a classifier $h^0$ and the least discriminatory alternative (which we can call $h''$) is 0.01. We may decide that such small differences are legally and ethically \textit{de minimis}, i.e., too trivial or small to merit consideration. Equipped with some minimal value of this sort, denoted $\epsilon$, perhaps we'd be satisfied if we can reliably and efficiently identify an alternative classifier whose disparity is within a factor of at least $(1+\epsilon)$ of the least discriminatory possible alternative. If we have this approximation, then the \textit{only case} where we might fail to find an LDA where one exists is when the starting classifier $h^0$ is exceedingly close to the least discriminatory alternative $h''$. In cases where the starting classifier is outside of a factor of $(1+\epsilon)$ of $h''$, we can guarantee that we will find a less discriminatory alternative.

\begin{definition}[Full-information approximate LDA]
    \label{def:approx-LDA}
    Given a set of data values $\mathbf{X}$, a Bayes-optimal predictor $\sigma(x)$, a group-specific probability density function $\rho_g(x)$, a utility function $\mathbf{U}(h;\lambda)$, a baseline classifier $h^0(x)$, 
    and $\epsilon >0$, a \textbf{full-information $\epsilon$-approximate LDA} is a classifier $h'$ with utility at least $\mathbf{U}(h^0)$ and absolute disparity less than $|\mathbf{\Delta}(h^0)|(\frac{1}{1+\epsilon})$.
\end{definition}

\begin{claim}
    The full-information $\epsilon$-approximate LDA can be identified in polynomial running time bounded by $O(n^3\epsilon^{-1})$.
\end{claim}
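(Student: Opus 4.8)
The plan is to reduce the approximate-LDA problem to a polynomially-sized collection of \emph{feasibility} subproblems, each of which is an instance of approximate subset-sum (or, equivalently, an integer-knapsack-feasibility question), and then invoke a standard pseudo-polynomial / FPTAS-style dynamic program. Recall from Definition~\ref{def:full-info-LDA} and the hardness discussion that, after grouping the population by its finitely many feature values $x_1,\dots,x_m$ (with $m \le n$), a classifier is determined by the subset $S$ of feature values it labels positively; each $x_i$ contributes a fixed disparity increment $d_i := \rho_1(x_i)-\rho_2(x_i)$ and a fixed utility increment $u_i$ derived from $\sigma(x_i)$, $\rho_g(x_i)$, and $\lambda$. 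So we must find $S$ with $\sum_{i\in S} u_i \ge \mathbf{U}(h^0)$ minimizing $\bigl|\sum_{i\in S} d_i\bigr|$, and we only need to certify that this minimum is below the threshold $|\mathbf{\Delta}(h^0)|/(1+\epsilon)$ whenever the true optimum is at most $|\mathbf{\Delta}(h^0)|/(1+\epsilon)^2$ or so — i.e., we need a $(1+\epsilon)$-approximation to the optimal disparity subject to the hard utility constraint.

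The key steps, in order: (1) Discretize the target disparity: the optimal signed disparity $D^\star := \sum_{i\in S^\star} d_i$ lies in $[-1,1]$; guess its sign and its magnitude up to a multiplicative $(1+\epsilon)$ factor by trying $O(\epsilon^{-1}\log(1/\delta))$ geometrically-spaced thresholds, or — cleaner for a clean $O(n^3\epsilon^{-1})$ bound — scale and round the $d_i$ to a common grid of spacing $\Theta(\epsilon \cdot |\mathbf{\Delta}(h^0)| / n)$, introducing total rounding error $O(\epsilon|\mathbf{\Delta}(h^0)|)$. (2) For the rounded instance, run a two-dimensional dynamic program over items $i=1,\dots,m$: the state records (a) the accumulated rounded disparity, which now takes $O(n/\epsilon)$ distinct values, and (b) it stores, for each such value, the \emph{maximum} achievable utility $\sum_{i\in S}u_i$ — this ``max-utility-per-disparity-bucket'' trick replaces the second dimension by a single scalar, keeping the table size $O(n/\epsilon)$ and the per-item update $O(n/\epsilon)$, for total time $O(m \cdot n/\epsilon) = O(n^2/\epsilon)$; an extra factor of $n$ comes from needing to also discretize/round along a second axis (e.g. when utilities are not commensurable and must themselves be bucketed, or from the $O(n)$ sign/threshold guesses), yielding the claimed $O(n^3\epsilon^{-1})$. (3) Read off from the final table the entry with disparity bucket closest to $0$ among those whose stored max-utility is $\ge \mathbf{U}(h^0)$; reconstruct the corresponding $S$ by standard backpointers; argue via the rounding bound in step~(1) that its true (unrounded) disparity is at most $(1+\epsilon)$ times the true optimum, hence strictly below $|\mathbf{\Delta}(h^0)|/(1+\epsilon)$ whenever $h^0$ is not within the $de~minimis$ window, which is exactly the guarantee demanded by Definition~\ref{def:approx-LDA}.

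I expect the main obstacle to be step~(2): the objective is a \emph{minimized absolute value} rather than a monotone maximization, so the usual knapsack FPTAS does not apply verbatim. The fix is that the absolute value of a signed sum is still a ``nice'' function of the accumulated signed sum, so carrying the signed disparity as the DP state (rather than the objective) restores a clean additive recursion; one then optimizes utility within each disparity bucket and only takes the absolute value at the very end. A secondary subtlety is ensuring the rounding of the $d_i$ does not distort which subsets satisfy the utility constraint — this is handled by rounding only the disparities (which appear in the objective) and keeping utilities exact in the ``max utility'' scalar, or, if utilities must also be bucketed for a running-time bound, by rounding them \emph{down} so feasibility is only ever under-reported, then arguing the lost utility is an $O(\epsilon)$ fraction that can be absorbed. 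I would also need a short argument that feature-value multiplicities and the full-information assumption make all the $d_i,u_i$ exactly computable in polynomial time, which is immediate from the definitions of $\rho_g$, $\sigma$, and $\mathbf{U}(h;\lambda)$ given in Section~\ref{subsec:formalism}.
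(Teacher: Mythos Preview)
The paper does not actually prove this claim: it explicitly defers the proof, calling it ``an exercise'' in light of ``its resemblance to the subset sum algorithm and approximation scheme'' (citing Kleinberg--Tardos). Your proposal is precisely the kind of argument the paper is gesturing at: reduce to selecting a subset of feature-value types, each carrying a disparity increment $d_i$ and utility increment $u_i$, then run a rounded-DP in the style of the subset-sum FPTAS, carrying the signed accumulated disparity as the state and tracking max utility per bucket. So your approach is consistent with the paper's (omitted) intended proof, and you have correctly identified and addressed the two nonstandard wrinkles --- that the objective is a minimized absolute value (handled by making signed disparity the state and taking the absolute value only at the end) and that the utility constraint must not be corrupted by rounding (handled by rounding only disparities, or rounding utilities conservatively).

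One caution on the running-time accounting: your step~(2) arrives at $O(n^2/\epsilon)$ for the DP and then hand-waves an extra factor of $n$ from ``a second axis'' or ``sign/threshold guesses.'' The sign guess is a factor of $2$, not $n$, and if you keep utilities exact you do not bucket them at all. The extra $n$ in the paper's stated bound most plausibly comes from the range of the rounded-disparity state: each $|d_i|\le 1$, so the signed sum ranges over $[-m,m]$ after scaling, and with grid spacing $\Theta(\epsilon|\mathbf{\Delta}(h^0)|/n)$ the number of buckets is $O(n^2/\epsilon)$ rather than $O(n/\epsilon)$ (since $|\mathbf{\Delta}(h^0)|$ can be much smaller than the total variation $\sum_i|d_i|$). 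That gives $m\cdot O(n^2/\epsilon)=O(n^3/\epsilon)$ directly. You should tighten that part of the argument rather than invoking an unspecified second discretization.
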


This claim's proof can be thought of as an exercise --- and the proof is deferred --- given its resemblance to the subset sum algorithm and approximation scheme \cite{kleinberg2006algorithm}. We implement the algorithm and its empirical performance is discussed in the next section.

\subsection{Empirical Demonstration of Approximation Algorithm}
\label{app:LDA-approximation-empirics}

\begin{figure*}[t]
    \centering
    \includegraphics[width=1\linewidth]{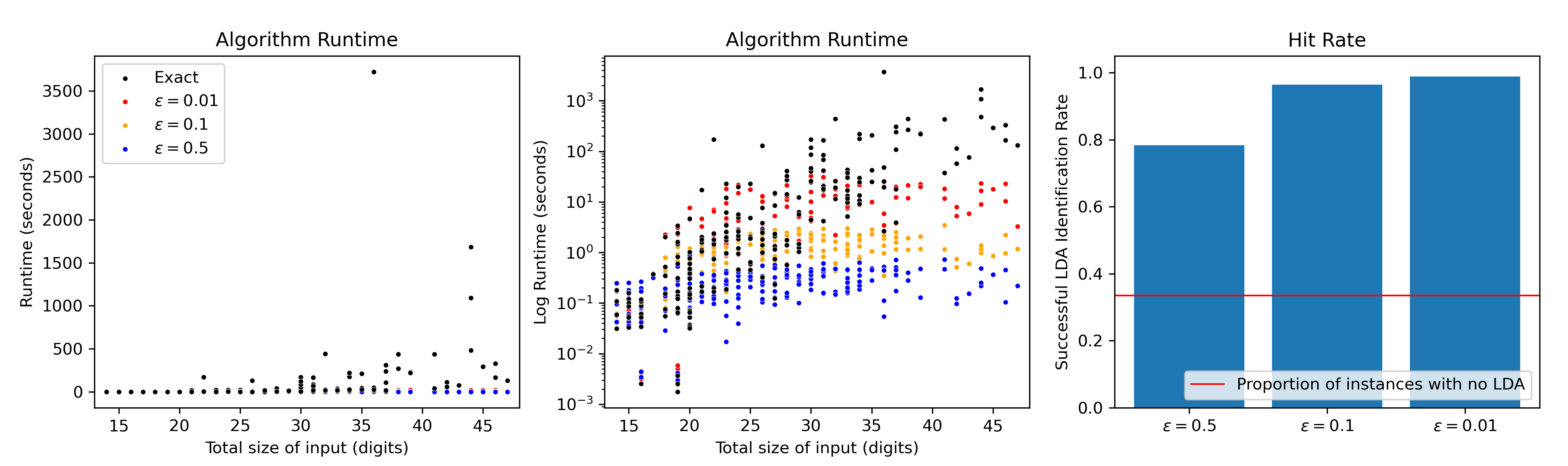}
    \caption{Runtime (left, center) and accuracy performance (right) of exact and approximate algorithms for computing the least discriminatory alternative classifier, given oracle access to information about the population distribution. The exact algorithm is compared to three different $(1+\epsilon)$ approximation algorithms with varying $\epsilon$. 
    A total of $250$ instances of the LDA problem were randomly generated, with varying numbers of data types, starting classifiers, and numbers of digits used to specify probability densities. For every generated instance, all four algorithms were run and runtime and accuracy were recorded. As the size of the input increases (measured as the total number of digits beyond the decimal places used to specify $\sigma(x)$ and $\rho_g(x)$), the worst-case runtime complexity for the exact algorithm explodes. However, polynomial-time approximations achieve a high hit rate, defined as the rate at which an LDA is identified successfully.}
    \label{fig:FPTAS}
\end{figure*}

To demonstrate the effectiveness of the approximation algorithm empirically, we implement both exact and approximation algorithms to solve any LDA problem, specified by $\rho_g(x), \sigma(x), h^0(x)$, for some discrete number of values $x\in \mathbf{X}$. Equipped with these algorithms, we can evaluate their performance on instances of the LDA problem. These findings corroborate the formal results put forward in this section: that although the worst-case runtime explodes for a general instances of the LDA, there are efficient (i.e., polynomial-time) approximation algorithms that identify LDAs with high accuracy.

\paragraph*{Simulating LDA problems.} To simulate instances of the LDA problem, we first specify the number of discrete data values, $$\texttt{n}_{\texttt{options}} := |\mathbf{X}|,$$ and the maximum number of digits per density specification, \texttt{max\-digits}. Given these specifications, a simulator generates two sets of uniform random numbers, each of size $\texttt{n}_{\texttt{options}}$, and then performs a manipulation so that each set adds to 1 and contains float values with at most \texttt{max\-digits} digits after the decimal place. These sets are used as values $\rho_1$ and $\rho_2$, the group-specific probability densities over the data. Finally, using the identical procedure, a simulator establishes the true probability values for each data option, $\sigma(x)$, this time using at most 2 digits after the decimal place. Finally, to create variation in the total size of the input, we simulate instances with varying values for $\texttt{n}_{\texttt{options}} \in \{4,5\}$ and $\texttt{max\-digits}\in\{1,2,3,4\}$. In total, we generate $202$ instances of LDA problems, of which 164 contain an identifiable less discriminatory alternative and 86 contain no LDA. In this simulation, the input sizes range from 14 to 47 total decimal digits.

The runtime and accuracy performances are displayed in Figure \ref{fig:FPTAS}. The code is additionally available in an \href{https://github.com/bendlaufer/what-constitutes-LDA}{online repository}.

\section{Supplementary Materials on Empirics}
\label{app:empirics}

\textbf{Data and Models (further information)}. Every row in the \texttt{Adult} dataset represents features of an individual, and the outcome variable is an indicator representing whether they make over \$50,000 in annual income. The data is on American adults from census information in 1994. We use a total of four variables -- \texttt{maritalstatus} (a categorical variable representing whether the individual is single/divorced/married/etc), \texttt{hoursperweek} (a numerical variable representing the number of hours the individual works per week), \texttt{education} (a categorical variable specifying the level of education achieved, e.g. college degree), and \texttt{workclass} (a categorical variable representing the type of work). The protected attribute is the \texttt{gender} (given as a binary M/F). The dataset contains a total of 32,561 rows.

The \texttt{German Credit} dataset contains loan decisions in Germany, and was accessed via the UCI machine learning database. The dataset contains a total of 1,000 rows with categorical and numerical features related to individual's credit, financial status, employment, and loan application. The protected attribute is the gender (given as a binary M/F). The following five features were used to train the models: \texttt{credit\_history\_category} (categorical variable representing information about credit history), \texttt{credit\_amount} (numerical variable representing amount of credit requested by loan applicant),  \texttt{unemployment\_category} (categorical variable with information about whether the individual is unemployed), \texttt{install\-ment\_rate\_per\-centage\_income} (numerical feature representing the ratio between loan amount and income), and \texttt{pre\-sent\_res\-id\-ence\_dur\-ation} (a numerical variable representing the amount of time the applicant has resided in their current residence).

\end{document}